\pdfoutput=1
\documentclass[journal,10pt]{IEEEtran}  
\usepackage{cite}
\usepackage{indentfirst}
\usepackage{graphicx}
\usepackage{changepage}
\usepackage{stfloats}
\usepackage{amsfonts,amssymb}
\usepackage{bm}
\usepackage{algorithm}
\usepackage{algorithmic}
\usepackage{amsmath}
\usepackage{amsmath,amssymb,amsfonts}
\usepackage{times}
\usepackage{url}
\usepackage{cite}
\usepackage{textcomp}
\usepackage{xcolor}
\usepackage{color}
\usepackage{setspace}
\usepackage{enumitem}
\usepackage{float}
\usepackage{diagbox}
\usepackage[T1]{fontenc}
\usepackage[utf8]{inputenc}
\usepackage{authblk}
\usepackage{threeparttable}
\usepackage{booktabs}
\usepackage{footnote}
\usepackage{graphicx}
\usepackage{pifont}
\usepackage{subfigure}
\usepackage{mathrsfs}
\allowdisplaybreaks[4]

\bibliographystyle{IEEEtran}
\newenvironment{proof}{{\indent  \indent \it Proof:}}{\hfill $\blacksquare$}

\begin{document}
\title{Cooperative Cellular Localization with Intelligent Reflecting Surface: Design, Analysis and Optimization}

\author{
	Kaitao Meng, \textit{Member, IEEE}, Qingqing Wu, \textit{Senior Member, IEEE}, Wen Chen, \textit{Senior Member, IEEE}, and Deshi Li
	\thanks{K. Meng is with the State Key Laboratory of Internet of Things for Smart City, University of Macau, Macau, 999078, China, and also with the Department of Electrical and Electronic Engineering, University College London, WC1E 7JE, UK (email: kaitao.meng@ucl.ac.uk). Q. Wu and W. Chen are with the Department of Electronic Engineering, Shanghai Jiao Tong University, Shanghai 201210, China (emails: \{qingqingwu, wenchen\}@sjtu.edu.cn). D. Li is with the Electronic Information School, Wuhan University, Wuhan, 430072, China (email: dsli@whu.edu.cn). }
	\thanks{The work of K. Meng was supported in part by UKRI under Grant EP/Y02785X/1. The work of Q. Wu was supported by NSFC 62371289, NSFC 62331022, and Guangdong science and technology program under grant 2022A0505050011. The work of W. Chen was supported in part by NSFC 62071296, Shanghai 22JC1404000, 20JC1416502, and PKX2021-D02. The work of D. Li was supported in part by the National Key R\&D Program of China (No. 2023YFE0206600). }   
	\thanks{An earlier version of this paper was presented in part at the 2023 IEEE Global Communications Conference (GLOBECOM) Workshop \cite{Meng2023Vehicle}.}
}
\maketitle

\begin{abstract}
Autonomous driving and intelligent transportation applications have dramatically increased the demand for high-accuracy and low-latency localization services. While cellular networks are potentially capable of target detection and localization, achieving accurate and reliable positioning faces critical challenges. Particularly, the relatively small radar cross sections (RCS) of moving targets and the high complexity for measurement association give rise to weak echo signals and discrepancies in the measurements. To tackle this issue, we propose a novel approach for multi-target localization by leveraging the controllable signal reflection capabilities of intelligent reflecting surfaces (IRSs). Specifically, IRSs are strategically mounted on the targets (e.g., vehicles and robots), enabling effective association of multiple measurements and facilitating the localization process. We aim to minimize the maximum Cramér-Rao lower bound (CRLB) of targets by jointly optimizing the target association, the IRS phase shifts, and the dwell time. However, solving this CRLB optimization problem is non-trivial due to the non-convex objective function and closely coupled variables. For single-target localization, a simplified closed-form expression is presented for the case where base stations (BSs) can be deployed flexibly, and the optimal BS location is derived to provide a lower performance bound of the original problem. Then, we prove that the transformed problem is a monotonic optimization, which can be optimally solved by the Polyblock-based algorithm. Moreover, based on derived insights for the single-target case, we propose a heuristic algorithm to optimize the target association and time allocation for the multi-target case. Furthermore, we provide useful guidance for the practical implementation of the proposed localization scheme by theoretically analyzing the relationship between time slots, BSs, and targets. Simulation results verify that deploying IRS on vehicles and effective phase shift design can effectively improve the resolution ability of multi-vehicle positioning and reduce the requirements of the number of BSs. 
\end{abstract}

\begin{IEEEkeywords}
	Cooperative localization, intelligent reflecting surfaces, Cramér-Rao lower bound, integrated sensing and communication, target association, phase shift design.
\end{IEEEkeywords}
\newtheorem{thm}{\bf Lemma}
\newtheorem{remark}{\bf Remark}
\newtheorem{Pro}{\bf Proposition}
\newtheorem{theorem}{\bf Theorem}
\newtheorem{Assum}{\bf Assumption}
\newtheorem{Cor}{\bf Corollary}

\section{Introduction}

In recent years, applications such as autonomous driving and intelligent transportation have brought a significant increase in the need for high-accuracy and low-latency localization services for mobile platforms, e.g., robots and vehicles \cite{Challenges2021Gyawali}. In general, localization techniques can be divided into two main categories: device-based sensing and device-free sensing \cite{Liu2022Survey}. For device-based sensing, it consists of targets transmitting and/or receiving signals, which generally requires strict time synchronization between transmitter and receiver, e.g., global positioning systems (GPS) and wireless sensor networks \cite{Win2018Network}, whereas device-free sensing enable analysis of the echo signals reflected from the target without the involvement of the target transmitter or receiver, such as radar \cite{richards2014fundamentals}. Also, the localization performance of GPS is generally affected by blockages, especially in densely obstructed environments. Hence, it is more practical and efficient for radar sensing to localize targets in high-density and high-mobility vehicular scenarios \cite{Huang2021MIMORadar, Bayesian2021Yuan}. Dedicated radar equipment is primarily used for military purposes and is not widely deployed in urban environments. Besides radar equipment,  cellular networks are potentially capable of detecting targets and measuring ranges, benefiting from the continuous increase in the communication frequency band. Additionally, cellular networks generally have a large coverage area and are widely available in cities. Thus, the use of  cellular networks to improve target localization performance has received increasing attention. 

In the literature, the majority of work on cellular network-aided positioning focuses on the deployment of base stations (BSs), beamforming design, and allocation of resources \cite{Xu2019Optimal, shi2022device, wang2022multi, Shi2021JointAssignment}. For example, under the single-target setup, an optimal sensor node placement strategy for circular time-of-arrival (TOA) localization in the three-dimensional (3D) space was proposed in \cite{Xu2019Optimal}. For the multi-target case, in \cite{shi2022device}, the orthogonal frequency-division multiplexing (OFDM) channel estimation method was adopted to obtain the corresponding delay of all bidirectional BS-target-BS paths, which was then used to jointly optimize data association and target localization. A high level of synchronization between BSs is required for this method, and twice as many BSs as the target number must be involved for effective positioning. Furthermore, \cite{wang2022multi} presented a target association method based on tracking techniques, where the similarity of the distributions of estimated and predicted locations were compared to distinguish different vehicle. The above studies have proved that wireless networks are capable of providing localization services by analyzing the echo signal received at the BSs.

Despite their advantages in localization, cellular networks still face some crucial challenges to achieving high-precision and ultra-reliable positioning for targets with high mobility. First, since receivers are unaware of concurrent reflections from multiple targets, it is difficult to match measurements (e.g., distances and angles) with the right target \cite{kazemi2021data}. For instance, in conventional distance-based multi-target localization schemes \cite{shi2022device, yang2022multitarget}, the association between multiple targets and extracted measurements needs to be determined based on the location of intersections calculated from several measured distances. In this case, the number of intersections inevitably increases exponentially with respect to the number of targets, making it computationally expensive to determine which intersections are real targets or "ghost/shadow" targets \cite{bosse2015direct, Yi2020SuboptimalRadar}. Second, radar cross sections (RCS) of served moving targets (e.g., vehicles) in urban environments are generally small, and the power of the transmitter may be limited, thus leading to weak echo signals and failure of precise target detection/tracking \cite{Meng2023Intelligent}. Finally, when the trajectories of several targets are in close proximity to each other, even with sufficient BSs, conventional localization methods have difficulty in discriminating these targets. It is worth noting that these issues are primarily caused by the uncontrollability of the echo signals reflected from targets. However, with the development of metamaterial technology, it has become possible to control the propagation path of electromagnetic waves via intelligent reflecting surfaces (IRSs) \cite{Wu2022IntelligentSurfaceOverview}. As a result, controlling echo signals is a promising solution to enhance localization performance and reduce interference with other wireless systems. To the best of our knowledge, the idea of utilizing echo signal control for enhancing multi-target localization performance in cellular networks is seldom explored. This thus motivates us to design a new cooperative localization scheme with the exploitation of IRSs.

Using massive low-cost reflecting elements, IRSs can achieve larger communication coverage and improved transmission quality \cite{wu2019intelligent, Chen2023IRSAidedOffloading}. In addition to providing effective communication services, IRSs can also aid to sense targets in blind areas by establishing virtual light-of-sight (LoS) links between BSs and targets \cite{Meng2022Intelligent, Dardari2022LOSNLOS}. For instance, the authors in \cite{Meng2022Intelligent} proposed a novel sensing scheme to sense multiple targets simultaneously by establishing a relationship between the target directions and signature sequences (SSs), and thus the target directions can be obtained by analyzing echo signals modulated with SSs. In \cite{Hua2023Intelligent}, receive sensors are deployed on the IRS surfaces to reduce echo signal path loss and facilitate the of target parameters. Moreover, IRSs can also further promote the mutual benefits between sensing and communication tasks \cite{meng2023Surfaceintelligent}. At present, the majority of existing works have focused on exploiting IRSs to improve communication/sensing performance among static or low-mobility users/targets, where IRSs are deployed in fixed locations, e.g., buildings and billboards \cite{Simultaneously2022Mu, Xu2022Channel}. However, the sensing coverage ability under such IRS deployment strategies may be constrained for the localization of high-mobility targets. How to achieve high-resolution and high-reliable localization as well as reduce interference between different measurements in high-mobility scenarios is still an open issue.

In this paper, we propose a novel multi-target localization scheme by employing IRSs on the surfaces of vehicles to facilitate target association and localization, as shown in Fig. \ref{figure1}.\footnote{The proposed localization can also be utilized to improve localization performance for other types of targets mounted with IRSs, such as robots and unmanned aerial vehicles (UAVs), as shown in Fig.~\ref{figure1a}.} In this way, besides the resource allocation on the transmitter sides, the controllable signal reflection at the IRS/target sides can be exploited to enhance the reflected signal strength and reduce interference with unassociated devices. Furthermore, to eliminate ranging bias from various sensing sites, especially when dealing with extended targets, the IRS at the target location can be treated as a point target, thereby avoiding the complex algorithm for tracking extended targets.
Then, in our considered system, the localization performance is described based on the Cramér-Rao lower bound (CRLB) \cite{Li2008RangeCompression, Liu2022CramerRaoBound}. Due to the ability to actively control echo signals, the association between BSs and IRSs is optimized together with the dwell time on each target in this work, thereby ensuring the fairness of multiple targets. Notice that there is a fundamental trade-off for association optimization between BSs and IRSs. On one hand, within a given duration, each IRS tends to establish association with more BSs to increase the diversity gain \cite{sadeghi2021target}. On the other hand, more associations inevitably reduce the dwell time of each IRS, resulting in decreased measurement accuracy and greater interference between different BSs. Thus, it is important to properly optimize time allocation and target association to provide a better localization performance improvement.

To investigate the localization performance of the proposed cooperative sensing scheme, the task duration is divided into several time slots to facilitate target association. Then, we aim to minimize the maximum CRLB of targets by jointly optimizing the target association, the IRS phase shifts, and the dwell time. However, solving this CRLB optimization problem is highly non-trivial due to the non-convex objective function, closely coupled variables, and the uncertain number of time slots. To handle this issue, for single-target localization, we first derive a performance lower bound with flexible BS deployments, to provide insights for algorithm design. Furthermore, for multi-target localization, we theoretically analyzed the relationship between performance and system parameters, providing useful guidance for practical implementation. The main contributions of this paper can be summarized as follows:
\begin{itemize}[leftmargin=*]
	\item We propose a novel IRS-aided multi-target localization scheme for vehicular networks, where an IRS is deployed on vehicles to enhance localization performance by jointly optimizing the target association, the phase shift vectors, and time resources. Given the uncertain prior location of targets, the beam flattening technique is adopted to ensure the reliability and effectiveness of distance measurement.
	\item We derive a closed-form expression of CRLB, based on which, the optimal BS location is analyzed to obtain a performance lower bound of the original problem. It is proved that with any given number of BSs, the minimum CRLB is exactly the same since the diversity gain exactly offsets the performance loss of each measurement. Then, the problem is proved to be a monotonic optimization (MO) problem and can be optimally solved by the Polyblock-based algorithm.
	\item For multi-vehicle cases, the relationship between the localization performance and the number of time slots, BSs, and targets is analyzed. Then, we propose a heuristic algorithm to optimize the target association and time allocation separately according to the derived insights for the single-target cases. 
	\item Finally, our simulation results verify the effectiveness of multi-target localization by leveraging IRSs and validate the superiority of the proposed schemes over benchmark schemes. Results show that deploying IRS on vehicles improves the resolution capability of multi-vehicle positioning and reduces the BS number requirements. 
\end{itemize}
\textit{Notations}: $\|{\bm{x}} \|$ represents the Euclidean norm of a complex-valued vector ${\bm{x}}$. $[{\bm{x}}]_{n}$ denotes the $n$th element of ${\bm{x}}$. For a general matrix ${\bm{X}}$, ${\bm{X}}^T$, and $[{\bm{X}}]_{m,n}$ respectively denote its transpose and the element in the $m$th row and $n$th column. For a square matrix ${\bm{Y}}$, ${\rm{Tr}}({\bm{Y}})$ and ${\bm{Y}}^{-1}$ denote its trace and inverse, respectively. $\mathbb{E}_x\{ \cdot \}$ denotes statistical expectation over the distribution of $x$, and $\{\cdot\}$ represents a variable set. $\otimes$ denotes the Kronecker product. 

\begin{figure}[t]
	\centering
	\setlength{\abovecaptionskip}{0.cm}
	\subfigure[IRS-assisted localization Scenarios.]
	{	
		\label{figure1a}
		\includegraphics[width=5.2cm]{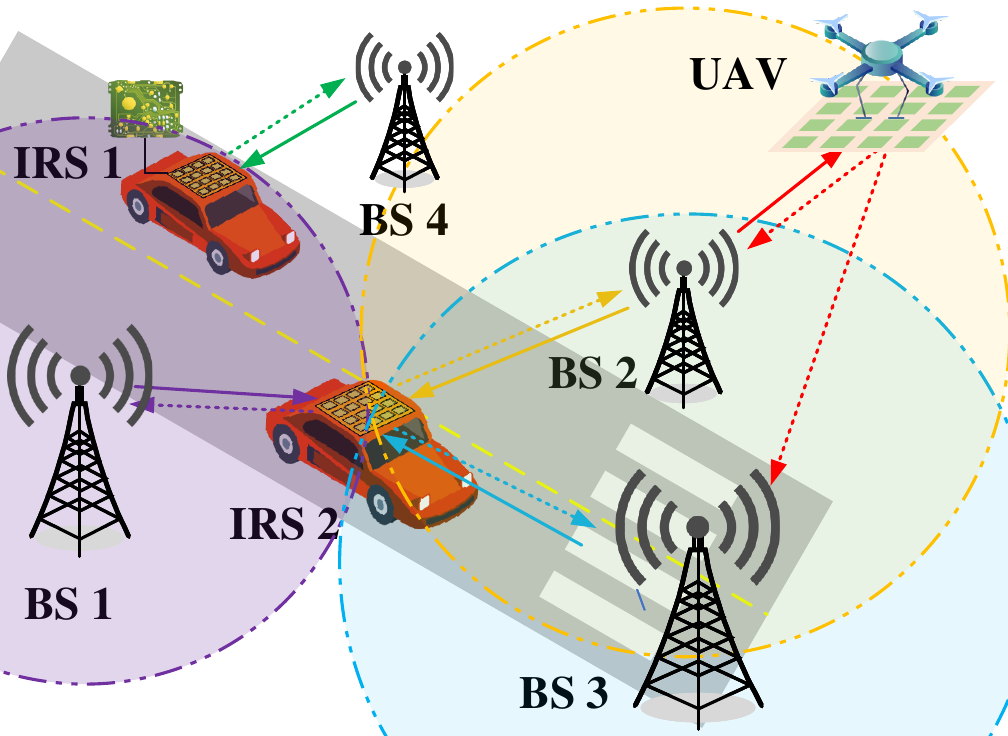}
	}
	\subfigure[Localization protocol.]
	{	
		\label{figure1b}
		\includegraphics[width=3cm]{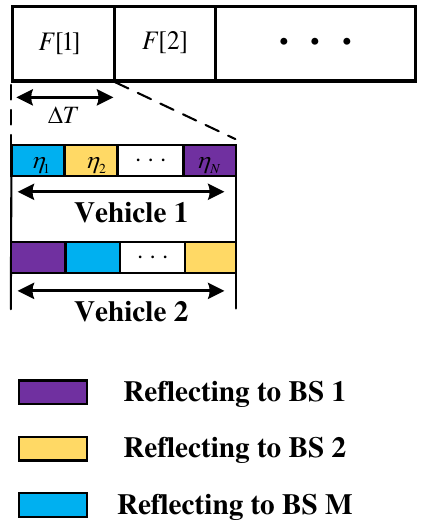}
	}	
	\caption{Scenarios and protocol for IRS-assisted localization.}
	\label{figure1}
\end{figure}
\section{System Model and Problem Formulation}
\label{AOAEstimationErrorSec2}

As shown in Fig. \ref{figure1}, $M$ cellular BSs with a single transmit antenna and a single receive
antenna are employed to provide localization services for $K$ vehicles in addition to communication services, i.e., the hardware of the communication receiver is actually reused for radar sensing. The targets can be vehicles, UAVs, robots, etc., and the focus of this work is primarily centered on vehicles, which are adopted as the primary exemplification. The proposed scheme can also be applied to multi-antenna cellular networks for providing localization services almost without affecting their communication performance since each BS can only use one transmit antenna for sensing. Here, we can equivalently estimate the IRS location as the target location due to its constant relative position, and the target and IRS are interchangeable according to context in the following discussion.\footnote{In practice, IRS can be easily attaches to diverse targets, such as vehicles and drones. It empowers targets to efficiently steer echo signals toward the desired direction for significantly improved positioning accuracy.} The BSs are indexed by $m \in {\cal{M}} = \{1, \cdots, M\}$, and the location of BS $m$ is denoted by $(x_m^r, y_m^r, H_{BS})$. The vehicles (representing targets in the following discussion) are indexed by $k \in {\cal{K}} = \{1, \cdots, K\}$, where an IRS is deployed on the surface of each vehicle for actively controlling echo signals and improving localization accuracy.\footnote{This positioning method solely relies on IRS interaction with the sensing transceivers and doesn't require any involvement from the target, thereby illustrating its robust scalability.} 
The uniform planar array (UPA) with half-wavelength antenna spacing is adopted at the IRS mounted on each vehicle, and the number of IRS elements is $L = L_x \times L_y$, indexed by $l \in {\cal{L}} = \{ 1,\cdots,L\}$, where $L_x$ and $L_y$ denote the number of elements along the $x$- and $y$-axis, respectively.

\subsection{Proposed Localization Scheme}

To tackle the target association issues, we propose a novel localization scheme to actively control the direction of echo signals by designing the IRS phase shifts, and thus facilitate to establish the association between targets and the measured distances at the BSs, as shown in Fig.~\ref{figure1}.\footnote{The angle information is not considered in this work since it requires multiple antennas equipped on the BSs, and the corresponding implementation cost and power consumption are high.} It is practically assumed that the location of vehicles keeps constant within a sufficiently small time duration $\Delta T$ \cite{Jayaprakasam2017Robust}. In the proposed localization scheme, the duration $\Delta T$ is divided into several time slots, and the association between BSs and IRSs is optimized according to the prior locations of vehicles. As shown in Fig.~\ref{figure1}, according to the designed association, the IRS phase shifts are optimized to boost the reflecting signal power towards the direction of the associated BS while reducing the interference to unassociated BSs, so as to achieve ultra-reliable and high-precision positioning with fewer BSs. Specifically, the time duration $\Delta T$ is divided into $N$ time slots, and $N$ is optimized according to the numbers of targets and BSs, which will be discussed in Section \ref{MultiVehicle}. The time slots are indexed by $n \in {\cal{N}} = \{1, \cdots, N\}$ and the proportion of time slot $n$ in $\Delta T$ is recorded as $\eta_n$ with $\sum\nolimits_{n=1}^N \eta_n = 1$. In this case, the dwell time at the $n$th time slot is $\eta_n \Delta T$.\footnote{The transmission duration of the associated BS could be set a little longer than the dwell time on the IRS, thereby improving the robustness of synchronization.} It is worth noting that in the proposed localization scheme, the BSs do not need to interact with the target, but only send control commands to the IRS controller to adjust its phase shifts. Due to the adopted device-free positioning method, the proposed algorithm is extremely tolerant to synchronization errors between IRSs and BSs \cite{Liu2022Survey}. 

To establish a unique association between BSs and IRSs, we introduce the set of binary variables $\{b_{k,m,n}\}$, $k \in {\cal{K}}, m \in {\cal{M}}, n \in {\cal{N}}$, which indicates that BS $m$ transmits signal $s_{m}(t)$, and then IRS $k$ reflects incident signal $s_{m}(t)$ towards BS $m$ for positioning vehicle $k$ at time slot $n$ if $b_{k,m,n} = 1$; when $b_{k,m,n} = 0$, it denotes that BS $m$ cannot be associated with IRS $k$ but can still be associated with other IRSs. The established unique association can effectively reduce the complexity of distance estimation and target localization. To establish the unique association between measured distances and vehicles, at time slot $n$, the phase shifts of IRS $k$ are designed to enhance the reflected echo signal power for at most one BS, i.e,\footnote{It is observed that the performance of our adopted association method is very close to that of the mechanism where multiple BSs are associated with one IRS in the same time slot, while effectively avoiding signalling overhead due to the low complexity of phase shift design.}
\begin{equation}
	\sum\nolimits_{m = 1}^M b_{k,m,n} \le 1, \forall k \in {\cal{K}}, n \in {\cal{N}}.
\end{equation}
If $\sum\nolimits_{m = 1}^M b_{k,m,n} = 0$, IRS $k$ does not reflect signals to any BS and it works as an absorbing metasurface at the $n$th time slot \cite{gong2020toward}, which generally happens when $K > M$. Otherwise, IRS $k$ reflects the incident signals transmitted by all active BSs. Similarly, to avoid mutual interference between BSs and IRSs, at each time slot, each BS transmits signals to sense at most one target, i.e.,
\begin{equation}
	\sum\nolimits_{k = 1}^K b_{k,m,n} \le 1, \forall m \in {\cal{M}}, n \in {\cal{N}}.
\end{equation}
Moreover, to simplify the practical implementation, we assume that one target and one BS are associated in at most one time slot as follows:
\begin{equation}
	\sum\nolimits_{n = 1}^N b_{k,m,n} \le 1, \forall k \in {\cal{K}}, m \in {\cal{M}}.
\end{equation}
By adopting the above constraints, the distance measured at the BSs can be associated with a specific target, thereby effectively improving the localization accuracy and avoiding the complexity of target association. In practice, the BS-IRS association is optimized at the BSs according to the prior location information, based on which the IRS phase shifts are designed to enhance the echo signal towards the associated BS while reducing the interference to unassociated BSs. Then, the designed phase shifts are transmitted to the IRS controller before each positioning operation.

Without loss of generality, the prior location of vehicle $k$, denoted by $(\hat x_k^t, \hat y_k^t)$, can be obtained based on its onboard sensors such as global positioning system (GPS) \cite{Liu2023CLosedRadarComm} or the state estimation in previous observations. The accurate 2D location of vehicle $k$, i.e., $(x_k^t, y_k^t)$, is assumed to be uniformly distributed within a circle with the radius $r_e$ and center point $(\tilde x_k^t, \tilde y_k^t)$ due to vehicle mobility or sensor measurement errors, where $(\tilde x_k^t, \tilde y_k^t)$ represents the prior location of IRS $k$. The actual distance between IRS $k$ and BS $m$ is expressed by 
\begin{equation}
	d_{k,m} = \sqrt{(x_m^r -  x_k^t)^2 + (y_m^r -  y_k^t)^2 + (H_{BS} - H_{\rm{IRS}})^2},
\end{equation}
where $H_{\rm{IRS}}$ represents the height of the IRS, and it is practically assumed to be known due to the fixed IRS deployment position relative to the vehicle. For notational simplicity, vehicles are assumed to drive along a straight road that is parallel to the $x$-axis. At the $n$th time slot, the reflection-coefficient matrix of IRS $k$ is given by ${\bm{\Theta}}_{k,n} = {\rm{diag}}(e^{j \theta_{k,1,n}}, ... , e^{j \theta_{k,L,n}})$, where $\theta_{k,l,n} \in [0, 2\pi)$ denotes the reflecting phase shifts of the $l$th element of IRS $k$.  

\subsection{Radar Measurement Model}
\label{RadarMeasurement}
The low-pass equivalent of the radar signal transmitted from BS $m$ is denoted by $s_{m}(t)$,\footnote{The transmitted signals may also be communication signals, which leads to an interesting future work for integrated sensing and communication (ISAC) applications \cite{meng2022throughput}.} and $\frac{1}{W} \int_{W \cdot \Delta t}\left|s_m(t)\right|^2 d t=1$, where $W$ is the number of symbols during signal processing interval and $\Delta t$ is the time of one symbol. Following the assumption in \cite{godrich2010target, sadeghi2021target}, the transmitted signals are approximately orthogonal for any time delay $\tau$ of interest, i.e.,\footnote{Information encoding methods can be adopted to ensure the orthogonality of signals transmitted by different BSs, such as code division multiple access (CDMA).}
\begin{equation}\label{FilterInterference}
	\frac{1}{W}\int_{W \cdot \Delta t} s_m(t) s_{m^{\prime}}^*(t-\tau) d t \approx\left\{\begin{array}{lll}
		1 & \text { if } m=m^{\prime} \\
		0 & \text { if } m \neq m^{\prime}
	\end{array}\right.,
\end{equation}
where $(\cdot)^*$ denotes the conjugate operator. The vehicle $k$’s direction relative to BS $m$ is defined by $\{\varphi_{k,m}, \phi_{k,m}\}$, where $\varphi_{k,m}$ and $\phi_{k,m}$ respectively denote the azimuth and elevation angles of the geometric path connecting IRS $k$ and BS $m$. The channel power gain between IRS $k$ and BS $m$ can be given by $\beta^G_{k,m} = \beta_0 d_{k,m}^{-2}$, where $\beta_0$ is the channel power gain at the reference distance 1 m (meter). $\bm{h}^{\mathrm{DL}}_{k,m}  \in \mathbb{C}^{1 \times L}$ and $\bm{h}^{\mathrm{UL}}_{k,m} \in \mathbb{C}^{L \times 1}$ are respectively the downlink and uplink channel vectors between BS $m$ and IRS $k$, given by
\begin{equation}\label{SteeringVector}
	\bm{h}^{\mathrm{DL}}_{k,m}=\sqrt{\beta^G_{k,m}} \bm{a}_{\mathrm{IRS}}\left(\varphi_{k,m}, \phi_{k,m}\right),
\end{equation}
\begin{equation}
	\bm{h}^{\mathrm{UL}}_{k,m}=\sqrt{\beta^G_{k,m}} \bm{a}_{\mathrm{IRS}}^{T}\left( \varphi_{k,m}, \phi_{k,m}\right), 
\end{equation}
where $\bm{a}_{\mathrm{IRS}} = \left[1, \cdots, e^{ {-j \pi\left(L_{x}-1\right) \Phi_{k,m} }}\right]^T  \otimes\left[1, \cdots, e^{ { -j  \pi\left(L_{y}-1\right) \Omega_{k,m} }}\right]^T$, $\Phi_{k,m} = \sin (\phi_k) \cos (\varphi_k)$, and $\Omega_{k,m} = \sin (\phi_k) \sin (\varphi_k)$. Here, $\Phi_m$ and $\Omega_m$ respectively denote the spatial frequency angles from BS $m$ to the IRS along the $x$-axis and $y$-axis.

Let $\mu_{k,m}$ and $\tau_{k,m}$ respectively denote Doppler frequency and the round-trip delay of echo signals transmitted/received by BS $m$. The transmission delay from BS $m'$ to IRS $k$ and then to BS $m$ is denoted by $\tau_{k,m,m'}$, and that from BS $m'$ to BS $m$ is denoted by $\tau_{m,m'}$. At the $n$th time slot, the echo signals received at BS $m$ can be expressed as\footnote{The reflected echo from the vehicle surface (non-IRS parts) is practically much weaker than the controllable echo signals of the IRSs, and the sensing performance is generally improved with the exploiting the echo signal reflected from the vehicle surfaces.}
\begin{align}
	&{{r}}_{m,n}(t) =  \underbrace{b_{k,m,n} {{g}}_{k,m,m,n} \sqrt{P_{\mathrm{A}}} s_{m}(t - \tau_{k,m})}_{\text{Reflected from associated target}} \nonumber \\
	& + \underbrace {\sum\nolimits_{m' \ne m}^M \sum\nolimits_{k = 1}^K b_{k,m',n} {{g}}_{k,m,m',n} \sqrt{P_{\mathrm{A}}} s_{m'}(t - \tau_{k,m,m'})}_{{\text{Echo signals transmitted by other BSs}}} \nonumber \\
	& + \underbrace {\sum\nolimits_{m' \ne m}^M \sum\nolimits_{k = 1}^K b_{k,m',n}  {{h}}_{m,m',n} \sqrt{P_{\mathrm{A}}} s_{m'}(t - \tau_{m,m'})}_{{\text{Interference signals directly transmitted by other BSs}}} \nonumber \\
	& + \underbrace  {b_{k,m,n}\sum\nolimits_{k' \ne k}^K  {{g}}_{k',m,m,n} \sqrt{P_{\mathrm{A}}} s_{m}(t - \tau_{k',m})}_{{\text{Interference reflected from unassociated IRSs}}} + {{z}}_m(t),
\end{align}
where  ${{g}}_{k,m,m',n} = e^{j2\pi \mu_{k,m} t} {\bm{h}}^{\mathrm{UL}}_{k,m} {\bm{\Theta}}_{k,n} {\bm{h}}^{\mathrm{DL}}_{k,m'}$, ${{h}}_{m,m',n}$ denotes the channel between BS $m$ and BS $m'$, $P_{\mathrm{A}}$ is the transmit power, ${{z}}_m(t) \in \mathcal{C} \mathcal{N}(0, \sigma_s^2 )$ represents the additive disturbance, and $\sigma_s^2$ is the noise power at the receive antennas of the BS. Note that the interference from other unassociated BSs can be removed by the filtering operations due to the orthogonality of signals in \eqref{FilterInterference}. The BS correlates the received signal with the associated normalized detection signal, and the  signal-to-interference-plus-noise ratio (SINR) of the received echos reflected from IRS $k$ is expressed in expectation form due to uncertain prior location of vehicles, i.e.,
\begin{align}\label{SensingReceivedPower}
	\gamma^{\mathrm{S}}_{k,m} =& \sum\nolimits_{n = 1}^N b_{k,m,n}  \frac{{\eta_{n} \Delta T}}{ \Delta t} \nonumber \\
	& \times \frac{P_{\mathrm{A}} \mathbb{E}_{\varphi_{k,m}, \phi_{k,m}} \left[ \left|  {{g}}_{k,m,m,n} \right|^{2} \right]}{ P_{\mathrm{A}} \sum\nolimits_{k' \ne k}^K   \mathbb{E}_{\varphi_{k',m}, \phi_{k',m}} \left[ \left|  {{g}}_{k',m,m,n}\right|^{2} \right] + \sigma_s^2}.
\end{align}
In (\ref{SensingReceivedPower}), $\frac{\Delta T}{\Delta t}$ represents the number of symbols during the time $\Delta T$. In (\ref{SensingReceivedPower}), the interference mainly arises from the echo signals reflected from unassociated IRSs. The interference relationship between IRSs and BSs is further analyzed in Section \ref{MultiVehicle}.

Based on the above analysis, by adopting a certain association $\{b_{k,m,n}\}$, the vector parameter ${\bm{p}}_k = [x^t_k, y^t_k]$ can be estimated based on the measured distance, e.g., least squares method and maximum likelihood estimation (MLE) method \cite{myung2003tutorial}. In this work, the MLE method is adopted to obtain the vehicle location. The estimated distance from BS $m$ to IRS $k$ is expressed as
\begin{equation}
	\tilde d_{k,m} = d_{k,m} + z_{k,m},
\end{equation}
\begin{figure*}[t]
	\begin{equation}\label{DefinitionCRLB}
		\begin{aligned}
			{\rm{CRLB}}_k \!=\! {\rm{tr}}\left( \mathcal{I}^{-1}(\boldsymbol{p}_k) \right) 
			\!= \! \frac{{\sum\nolimits_{m = 1}^M {\frac{{{{ {\cos^2 {\phi _{k,m}}} }}}}{{\sigma _{k,m}^2}}} }}{{\sum\nolimits_{m = 1}^M {\frac{{{{ {\cos^2 {\varphi _{k,m}}\cos^2 {\phi _{k,m}}} }}}}{{\sigma _{k,m}^2}}\sum\nolimits_{m = 1}^M {\frac{{{{ {\sin^2 {\varphi _{k,m}}\cos^2 {\phi _{k,m}}} }}}}{{\sigma _{k,m}^2}}} \! - \! {{\left( {\sum\nolimits_{m = 1}^M {\frac{{\cos {\varphi _{k,m}}\cos {\phi _{k,m}}\sin {\varphi _{k,m}}\cos {\phi _{k,m}}}}{{\sigma _{k,m}^2}}} } \right)}^2}} }}.
		\end{aligned}
	\end{equation}
	\hrulefill
\end{figure*}
where the error of estimated distance $z_{k,m}$ is generally inversely proportional to the SINR of echo signals at the BS receiver \cite{zhang2023fast, dai2022composed}, i.e., 
\begin{equation}\label{ReceivedPowerEquation}
	\sigma_{{k,m}}^{2} \propto ({{\gamma^{\mathrm{S}}_{k,m} }})^{-1}, 
\end{equation}  
where ${\gamma^{\mathrm{S}}_{k,m}}$ is the SINR at the receive antenna of BS $m$ after match-filtering. Given a vector parameter ${\bm{p}}_k = [x^t_k, y^t_k]$, the unbiased estimate satisfies the following inequality
\begin{equation}
	E_{\hat{\boldsymbol{p}}_k}\left\{(\hat{\boldsymbol{p}}_k-\boldsymbol{p}_k)(\hat{\boldsymbol{p}}_k-\boldsymbol{p}_k)^T\right\} \geq \mathcal{I}^{-1}(\boldsymbol{p}_k),
\end{equation}
where $\mathcal{I}(\boldsymbol{p}_k)$ is the Fisher Information matrix (FIM), and it can be obtained by utilizing the chain rule based on the measurement ${\bm{d}}_k = [d_{k,1},\cdots,d_{k,M}]$, when the measurement noise is Gaussian and the distance measurement covariance matrix $\boldsymbol{\Sigma}_k$ is not dependent on the parameter ${\bm{p}}_k$ \cite{torrieri1984statistical}, i.e., 
\begin{equation}\label{FIMequation}
	\mathcal{I}(\boldsymbol{p}_k)=\left(\nabla_{\boldsymbol{p}_k} \boldsymbol{d}(\boldsymbol{p}_k)\right)^T {\boldsymbol{\Sigma}}_k^{-1}\left(\nabla_{\boldsymbol{p}_k} \boldsymbol{d}(\boldsymbol{p}_k)\right),
\end{equation}
where the distance measurement covariance matrix of $M$ BSs is denoted by
\begin{equation}\label{DistnaceCovariance}
	\boldsymbol{\Sigma}_k=\left[\begin{array}{ccc}
		\sigma_{k,1}^2 & \cdots & 0 \\
		\vdots & \ddots & \vdots \\
		0 & \cdots & \sigma_{k,M}^2
	\end{array}\right]_{M \times M}.
\end{equation}
In (\ref{FIMequation}), the Jacobian matrix $\nabla_{\boldsymbol{p}_k} \boldsymbol{d}(\boldsymbol{p}_k)$ is $\nabla_{\boldsymbol{p}_k} \boldsymbol{d}(\boldsymbol{p}_k) = \frac{\partial {\bm{d}}_k}{\partial {\bm{p}_k}}$, i.e., 
\begin{equation}\label{JacobianMatrix}
	\nabla_{\boldsymbol{p}_k} \boldsymbol{d}(\boldsymbol{p}_k) = \left[ {\begin{array}{*{20}{c}}
			{\cos {\varphi _{k,1}}\cos {\phi _{k,1}}}&{\sin {\varphi _{k,1}}\cos {\phi _{k,1}}}\\
			\vdots & \vdots \\
			{\cos {\varphi _{k,M}}\cos {\phi _{k,M}}}&{\sin {\varphi _{k,M}}\cos {\phi _{k,M}}}
	\end{array}} \right].
\end{equation}
By plugging (\ref{DistnaceCovariance}) and (\ref{JacobianMatrix}) into (\ref{FIMequation}), the CRLB of target $k$'s location estimation error can be given by (\ref{DefinitionCRLB}), which is shown at the top of the page.

\subsection{Problem Formulation}
In this work, we aim to minimize the maximum CRLB of the location estimation error by jointly optimizing the BS-IRS association, the phase shift matrices of IRSs, and the time allocation. The problem is formulated as follows.
\begin{alignat}{2}
	\label{P1}
	(\rm{P1}): \quad & \begin{array}{*{20}{c}}
		\mathop {\min }\limits_{\{{\bm{\Theta}}_{k,n}\}, {\bm{\eta}}, \{b_{k,m,n}\}, N} \quad   \mathop {\max }\limits_k \  {\rm{CRLB}}_{k}
	\end{array} & \\ 
	\mbox{s.t.}\quad
	& \theta_{k,l,n} \in [0, 2 \pi), \forall k \in {\cal{K}}, l \in {\cal{L}}, n \in {\cal{N}}, \tag{\ref{P1}a} \\
	&  \sum\nolimits_{n = 1}^N \eta_{n} = 1, \tag{\ref{P1}b} \\
	& \eta_{n} \in [0, 1], \forall n \in {\cal{N}}, & \tag{\ref{P1}c} \\
	& \sum\nolimits_{k = 1}^K b_{k,m,n} \le 1, \forall m \in {\cal{M}}, n \in {\cal{N}}, & \tag{\ref{P1}d} \\
	& \sum\nolimits_{m = 1}^M b_{k,m,n} \le 1, \forall k \in {\cal{K}}, n \in {\cal{N}},  & \tag{\ref{P1}e} \\
	& \sum\nolimits_{n = 1}^N b_{k,m,n} \le 1, \forall k \in {\cal{K}}, m \in {\cal{M}}, & \tag{\ref{P1}f} \\
	& b_{k,m,n} \in \{0,1\}, \forall k \in {\cal{K}}, m \in {\cal{M}}, n \in {\cal{N}}, & \tag{\ref{P1}g} \\
	& N \in  \mathbb{N}^+ , & \tag{\ref{P1}e}
\end{alignat}
where ${\bm{\eta}} = [\eta_{1},\cdots,\eta_N]$. Solving (P1) optimally is non-trivial due to the non-convex objective function and the closely coupled variables. To tackle this issue, we first propose a phase shift design method based on beam flattening, and analyze the number of optimal associated BSs to simplify the formulated problem in Section \ref{SingleVehicle}. Then, the relationship between the numbers of time slots, BSs, and targets for the multi-vehicle case is analyzed in Section \ref{MultiVehicle}.

\section{Single-Vehicle Localization}
\label{SingleVehicle}

In this section, we consider the single-vehicle setup, i.e., $K=1$, to draw useful insights into the IRS phase shift and time allocation design. If more than $M$ time slots are employed, some BSs may be assigned multiple time slots, in this case, merging multiple time slots associated with the same BS does not compromise the sensing performance. Therefore, without loss of generality, the number of time slots is set to $N = M$. Then, variables $\{b_{m,n}\}$ can be ignored since the association order does not affect the CRLB value and the BS-IRS association can be recovered according to the time allocation results. For simplicity, the IRS is associated with the BSs in the order of the BS index, and in this case, the time allocation variable can be denoted by $\{\eta_m\}_{m=1}^M$. Then, (P1) is simplified to (by dropping the IRS index) minimizing the CRLB of the localization error, i.e., 
\begin{alignat}{2}
	\label{P2}
	(\rm{P2}): \quad & \begin{array}{*{20}{c}}
		\mathop {\min }\limits_{\{{\bm{\Theta}}_{m}\}, {\bm{\eta}}} \quad    {\rm{CRLB}}
	\end{array} & \\ 
	\mbox{s.t.}\quad
	& \theta_{l,m} \in [0, 2 \pi), \forall l \in {\cal{L}}, m \in {\cal{M}}, \tag{\ref{P2}a} \\
	&  \sum\nolimits_{m = 1}^M \eta_{m} = 1, \tag{\ref{P2}b} \\
	& \eta_{m} \in [0, 1], \forall m \in {\cal{M}}. & \tag{\ref{P2}c} 
\end{alignat}

In the following, we first present a phase shift design method for robust localization, and derive the simplified expression of the CRLB with respect to time allocation ${\bm{\eta}}$. Then, under the case that the location of BSs can be jointly optimized, we derive the optimal location of the associated BSs (i.e., $\eta_m > 0$) and a lower bound of CRLB. According to the derived conclusions, we propose an MO-based algorithm to optimally solve the time allocation problem for the case where the location of BSs is fixed.

\subsection{Phase Shift Design for Robust Localization}
In the considered system, the phase shift design to achieve efficient and robust localization is challenging since the accurate location $(x^t, y^t)$ is uncertain and the IRS phase shifts should be obtained and transmitted to the IRS controller with ultra-low latency. To this end, we propose an efficient solution based on the beam flattening technique \cite{Lu2021AerialSurface}. Specifically, due to the uncertain location of the vehicle, spatial resolution angles from BS $m$ to the IRS (i.e., $\Phi_{m}$ and $\Omega_{m}$) lie within two angular spans, denoted by $\left[\bar \Phi_m, \underline \Phi_m\right]$ and $\left[\bar \Omega_m, \underline \Omega_m\right]$, where $\bar \Phi_m = \mathop {\max }\limits_{{{x}^t_m,{y}^t_m}} \Phi_m$, $\underline \Phi_m = \mathop {\min }\limits_{{{x}^t_m,{y}^t_m}} \Phi_m$, $\bar \Omega_m = \mathop {\max }\limits_{{{x}^t_m,{y}^t_m}} \Omega_m$, and $\underline \Omega_m = \mathop {\min }\limits_{{{x}^t_m,{y}^t_m}} \Omega_m$. By adopting beam flattening \cite{Lu2021AerialSurface}, when the IRS is associated with BS $m$, the IRS elements of each row along the $x$-axis are divided into $Q^x_m$ sub-arrays with $L^s_x = L_x/Q^x_m$ elements in each sub-array. Similarly, the IRS elements of each column along the $y$-axis are divided into $Q^y_m$ sub-arrays with $L^s_y = L_y/Q^y_m$ elements in each sub-array, where $L_x/Q^x_m$ and $L_y/Q^y_m$ are assumed to be integers for ease of analysis. As a result, the IRS is divided into $Q^x_m \times Q^y_m$ parts to ensure that BS $m$ can receive the echo signal reflected from the IRS, where $Q^x_m = \sqrt{ { L_x \left|\bar \Phi_m - \underline \Phi_m\right|}/{2}}$ and $Q^y_m = \sqrt{ { L_y \left|\bar \Omega_m - \underline \Omega_m\right|}/{2}}$. 
Here, the IRS phase shifts can be designed based on the beam flattening technique proposed in \cite{Lu2021AerialSurface} and the BSs only need to transmit the angular span information ($\bar \Phi_m$ and $\underline \Phi_m$) to the IRS controller for phase shift design. Then, the effective beamforming gain of one subarray is a function of the spatial frequency distance ($\Delta_{\Phi}$ and $\Delta_{\Omega}$) to the beam center, denoted by $P(\Delta_{\Phi}, \Delta_{\Omega})=\frac{\sin \left(\frac{\pi L_x^s  \Delta_{\Phi}}{2} \right)}{\sin \left(\frac{\pi  \Delta_{\Phi}}{2}\right)} \frac{\sin \left(\frac{\pi L_y^s \Delta_{\Omega}}{2} \right)}{\sin \left(\frac{\pi  \Delta_{\Omega}}{2}\right)}.$
Specifically, $P(\Delta_{\Phi}, \Delta_{\Omega})$ has a peak at $\Delta_{\Phi}=0$ and $\Delta_{\Omega} = 0$, and nulls at $\Delta_{\Phi}= \pm \frac{2k}{L_x^s}, k=1, \cdots, L_x^s-1$ and $\Delta_{\Omega}= \pm \frac{2j}{L_y^s}, j=1, \cdots, L_y^s-1$. Then, the coverage beamwidth can be deemed as $\frac{2}{L_x^s}$ and $\frac{2}{L_y^s}$ along the $x$-axis and $y$-axis, respectively. To obtain a flattened beam, the adjacent spatial frequency shift is separated by the spatial frequency resolution of the subarray, i.e., $\frac{2}{L_x^s}$ and $\frac{2}{L_y^s}$. Hence, the $m$th and $n$th sub-array beam's directions along the $x$ and $y$ axes can be expressed as $\bar{\Phi}_m^{l_x} = \bar{\Phi}_m^1 + \frac{2(l_x-1)}{L_x^s}$ for $l_x=1, \cdots, Q^x_m$, and $\bar{\Omega}_m^{l_x} = \bar{\Omega}_m^1 + \frac{2(l_y-1)}{L_y^s}$ for $l_y=1, \cdots, Q^y_m$ respectively. 
By adopting beam flattening, even if the position of the vehicle is uncertain, the BS can always receive the echo signal reflected from the IRS, as shown in  Fig.~\ref{figure3}. Then, the 
signal-to-noise ratio (SNR) of echo signals received at BS $m$ can be recast in approximate form as
\begin{equation}
	\gamma_{m}^S \approx \eta_m \bar \gamma_{m}^S,
\end{equation} 
where $Q_m = Q_m^x Q_m^y$ and $\bar \gamma_{m}^S = \frac{\beta^2_0 \Delta T L^2}{ Q^2_m \Delta t  d_{m}^4 \sigma_s^2}$. If the given error bound of the target's location is less than a range (i.e., less than the beam width), $Q_m = 1$, i.e., all the beam is aligned towards the prior location. Obviously, the larger the angular range, the more dispersed the echo power and the correspondingly lower distance measurement accuracy. 

\begin{figure}[t]
	\centering
	\includegraphics[width=8cm]{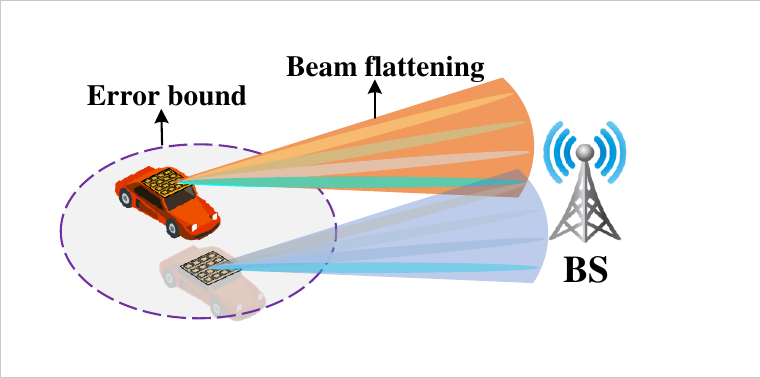}
	\caption{Illustration of beam flattening for robust localization.}
	\label{figure3}
\end{figure} 

According to (\ref{ReceivedPowerEquation}), we have $\sigma_{{m}}^{2} = \frac{C_0}{{{\gamma^{\mathrm{S}}_{m} }}}$, where $C_0$  is the variance parameter of the estimation method \cite{Liu2022Survey}. Let $\tilde \gamma _{m} = \bar \gamma_{m} \cos^2\phi_{m}$, we can equivalently simplify the CRLB expression as
\begin{equation}\label{SimplifiedCLEB}
	\begin{aligned}	
		{\rm{CRLB}} =	\frac{C_0{\sum\nolimits_{m = 1}^M {{\eta _m}} {{{\tilde \gamma }_{m}}} }}{{\sum\nolimits_{j = 1}^{M-1} {{{\eta _j}} {{\tilde \gamma }_{j}} \left( {\sum\nolimits_{i = j+1  }^M {{{{{\eta _i}} {{\tilde \gamma }_{i}}{\sin^2 (\varphi_i - \varphi_j)}}}} } \right)} }}.
	\end{aligned}
\end{equation}

The simplified expression of CRLB in (\ref{SimplifiedCLEB}) shows that the sensing accuracy is only related to the azimuth difference between any two associated BSs to the target, i.e., $\varphi _j - \varphi _i$, instead of the absolute azimuth angle. Nonetheless, it is still challenging the optimally solve (P2) since the CRLB is non-convex with respect to time allocation ${\eta}$. To tackle this issue, we first analyze the minimum number of associated BSs with flexible BS locations. Then, we prove (P2) is a monotonic optimization (MO) problem with respect to time allocation, based on which, (P2) can be optimally solved by Polyblock-based algorithm \cite{tuy2000monotonic}.

\subsection{Lower Bound under Flexible BS Deployment}
\label{FlexibleBS}
In this subsection, we derive the optimal location of associated BSs and the optimal time allocation results. In practice, the distance from BS $m$ to the target satisfies $d_{m} \ge \underline d$, where $\underline d$ is the minimum distance with the consideration of safety.

\begin{thm}\label{LowerBoundCRLB}
	For any given $M \ge 3$, if the location of BSs can be jointly optimized and $Q_m = 1, \forall m$, the minimum CRLB in (P2) can be given by
	\begin{equation}\label{AccuracyBound}
			{\rm{CRLB}}^*  = \frac{4 C_0 \Delta t  \underline d^6 \sigma_s^2}{ \Delta T \beta^2_0 L^2  \left(\underline d^2 - (H_{BS} - H_{IRS})^2\right)},
	\end{equation}
	where the equality holds when $\eta^*_m = 1/M$ and $\varphi_m^* = \frac{2(m-1)\pi}{M}$,  $\phi_m^* = \arcsin{\frac{|H_{BS} - H_{IRS}|}{\underline d}}, m = 1, \cdots, M$.
\end{thm}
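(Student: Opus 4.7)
My plan is to reshape (\ref{SimplifiedCLEB}) into a form in which the angular placement of the BSs decouples from the effective SNR--time products $\xi_m := \eta_m \tilde\gamma_m$, and then to optimize the two factors separately. With $Q_m=1$ and $\delta := |H_{BS}-H_{\rm{IRS}}|$, the identity $\cos^2\phi_m = 1-\delta^2/d_m^2$ yields
\begin{equation*}
\tilde\gamma_m \;=\; \bar\gamma_m^S \cos^2\phi_m \;=\; \frac{\beta_0^2\, \Delta T\, L^2\,(d_m^2-\delta^2)}{\Delta t\, d_m^6\, \sigma_s^2},
\end{equation*}
so the problem becomes a joint choice of the magnitudes $\{\xi_m\}$ (via $\eta_m$ and $d_m$) and the azimuths $\{\varphi_m\}$.

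Setting $S := \sum_m \xi_m$, using $\sin^2(\varphi_i-\varphi_j)=(1-\cos(2\varphi_i-2\varphi_j))/2$, and expanding the cosine with the product-to-sum identities, a direct algebraic manipulation gives
\begin{equation*}
\sum_{j<i}\xi_j\xi_i\sin^2(\varphi_i-\varphi_j) \;=\; \tfrac{1}{4}\!\left[S^2 - \Bigl(\sum_m \xi_m\cos 2\varphi_m\Bigr)^2 - \Bigl(\sum_m \xi_m\sin 2\varphi_m\Bigr)^2\right].
\end{equation*}
Denoting the two inner sums by $A$ and $B$, this turns (\ref{SimplifiedCLEB}) into ${\rm CRLB} = 4C_0 S/(S^2-A^2-B^2) \ge 4C_0/S$, with equality iff $A=B=0$.

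I would then exhibit one configuration attaining both bounds simultaneously. On the angular side, $\varphi_m^\star=2(m-1)\pi/M$ together with $\eta_m^\star=1/M$ makes $\sum_m e^{\mathrm{j}2\varphi_m^\star}$ a geometric series with ratio $e^{\mathrm{j}4\pi/M}\ne 1$ for $M\ge 3$, hence $A=B=0$. On the magnitude side, $\sum_m\eta_m=1$ gives $S\le \max_m\tilde\gamma_m$; differentiating $\tilde\gamma_m$ in $u:=d_m^2$ produces the unique critical point $u=3\delta^2/2$ with monotone decay beyond, so under the practical regime $\underline d \ge \sqrt{3/2}\,\delta$ the constrained maximizer is $d_m^\star=\underline d$, equivalently $\phi_m^\star=\arcsin(\delta/\underline d)$. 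Plugging $\tilde\gamma_m^{\max}=\beta_0^2\Delta T L^2(\underline d^2-\delta^2)/(\Delta t\,\underline d^6\sigma_s^2)$ into $4C_0/S$ reproduces (\ref{AccuracyBound}).

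The main obstacle is the compatibility of the two optimizations: killing $A^2+B^2$ demands balanced weights and symmetric angles, whereas maximizing $S$ pushes every distance to its lower bound. These are jointly met by $(\bm{\eta}^\star, \{\varphi_m^\star\}, \{d_m^\star\})$ precisely when $M\ge 3$, since for $M=2$ the proposed $\varphi_m^\star$ collapses to $\{0,\pi\}$, making $\sin^2(\varphi_2-\varphi_1)=0$ and the denominator vanish. A secondary point is the implicit requirement $\underline d \ge \sqrt{3/2}\,\delta$, which rules out the interior (non-physical) maximizer of $\tilde\gamma_m$ and is automatically satisfied in any realistic cellular deployment.
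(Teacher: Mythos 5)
Your proof is correct and, at its core, lands on the same structure as the paper's Appendix A (bound the denominator by $S^2/4$, then maximize the total effective SNR by putting every BS at minimum distance), but the packaging is genuinely different and in one place more rigorous. The paper works from the unsimplified FIM form (\ref{DefinitionCRLB}) and applies two successive bounds: it drops the squared cross term (equality iff $\sum_m \xi_m\sin\varphi_m\cos\varphi_m=0$) and then maximizes $(T-G)G$ at $G=T/2$ (equality iff $\sum_m\xi_m\sin^2\varphi_m=\tfrac12\sum_m\xi_m$); by the double-angle formulas these two conditions are exactly your $B=0$ and $A=0$, so your single identity $\sum_{j<i}\xi_i\xi_j\sin^2(\varphi_i-\varphi_j)=\tfrac14\bigl(S^2-\lvert\sum_m\xi_m e^{\mathrm{j}2\varphi_m}\rvert^2\bigr)$ compresses the paper's two inequalities into one with a transparent ``phasor balance'' equality condition, and your geometric-series argument makes explicit why $\varphi_m^*=2(m-1)\pi/M$ only works for $M\ge3$, which the paper never explains. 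The more substantive divergence is the elevation/distance step: the paper simply asserts that the CRLB is monotonically decreasing in $\phi_m$ and jumps to $d_m=\underline d$, whereas, as you note, $\tilde\gamma_m\propto(d_m^2-\delta^2)/d_m^6$ has an interior maximum at $d_m^2=3\delta^2/2$, so the paper's monotonicity claim (and hence the stated optimizer) is valid only under your explicit side condition $\underline d\ge\sqrt{3/2}\,\delta$ --- a hypothesis the lemma tacitly needs, though it is harmless for realistic deployments. Both arguments then close identically by exhibiting the uniform-time, equally-spaced-azimuth, minimum-distance configuration that attains every inequality with equality.
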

\begin{proof}
	Please refer to Appendix A. 
\end{proof}

Lemma \ref{LowerBoundCRLB} shows that with flexible BS deployment, the minimum CRLB is inversely proportional to the equivalent SNR $\tilde \gamma^S_m$. Moreover, it can be found that the minimum CRLB with different numbers of BSs is exactly equal to each other, i.e., the optimal localization performance of the proposed localization system is independent of the number of BSs. This is in sharp contrast to the traditional distributed radar sensing systems, in which the corresponding localization accuracy is generally improved with the increase of the transmitters' number. The main reason is that with the given time interval $\Delta T$, if the time is allocated to more BSs and the average time resource allocated to each BS will enviably decrease due to the limited total dwell time $\Delta T$, and the localization performance improvement brought by sensing direction diversity cannot compensate for the performance loss caused by the time reduction for each distance measurement.

When there is prior knowledge about the region of the target, two receivers ($M = 2$) would suffice for target localization. It is not difficult to verify that if $M = 1$, ${\rm{CRLB}} \to \infty$. Hence, as shown in Lemma \ref{LowerBoundCRLB}, the achievable minimum CRLB values under different numbers of BSs are all equal to ${\rm{CRLB}}^*$ in (\ref{AccuracyBound}), thus the minimum number of associated BSs is two. 

\begin{Pro}\label{2BS_optimal}
	When $M = 2$, the optimal time allocation $\eta^*_1 = \frac{\sqrt{\tilde \gamma_2}}{\sqrt{\tilde \gamma_1} + \sqrt{\tilde \gamma_2}}$ and $\eta^*_2 = \frac{\sqrt{\tilde \gamma_1}}{\sqrt{\tilde \gamma_1} + \sqrt{\tilde \gamma_2}}$.
\end{Pro}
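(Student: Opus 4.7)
The plan is to specialize the closed-form CRLB in~(\ref{SimplifiedCLEB}) to $M=2$ and then solve a one-dimensional convex optimization in the time allocation. First, I would write out the numerator and denominator of~(\ref{SimplifiedCLEB}) for $M=2$: the outer sums collapse to $j=1,i=2$, so the CRLB reduces to
\begin{equation*}
{\rm{CRLB}} \;=\; \frac{C_0\,(\eta_1 \tilde\gamma_1 + \eta_2 \tilde\gamma_2)}{\eta_1 \eta_2\, \tilde\gamma_1 \tilde\gamma_2\, \sin^2(\varphi_2-\varphi_1)}.
\end{equation*}
Since $\sin^2(\varphi_2-\varphi_1)$ and $C_0$ do not depend on $\bm\eta$, the problem is equivalent to
\begin{equation*}
\min_{\eta_1,\eta_2\ge 0,\ \eta_1+\eta_2=1}\ \frac{\eta_1 \tilde\gamma_1 + \eta_2 \tilde\gamma_2}{\eta_1\eta_2\,\tilde\gamma_1\tilde\gamma_2}
\;=\;\min\ \frac{1}{\eta_1\tilde\gamma_1} + \frac{1}{\eta_2\tilde\gamma_2}.
\end{equation*}
The algebraic identity in the last step (splitting the ratio) is the simplification that makes everything transparent.

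Next I would eliminate the equality constraint by setting $\eta_2=1-\eta_1$ and study $g(\eta_1) = \frac{1}{\eta_1\tilde\gamma_1} + \frac{1}{(1-\eta_1)\tilde\gamma_2}$ on $\eta_1\in(0,1)$. Differentiating once gives
\begin{equation*}
g'(\eta_1) = -\frac{1}{\eta_1^2 \tilde\gamma_1} + \frac{1}{(1-\eta_1)^2 \tilde\gamma_2},
\end{equation*}
and $g''(\eta_1) = \frac{2}{\eta_1^3\tilde\gamma_1} + \frac{2}{(1-\eta_1)^3\tilde\gamma_2}>0$, so $g$ is strictly convex and the unique stationary point is the global minimizer. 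Setting $g'=0$ yields $\sqrt{\tilde\gamma_2}\,(1-\eta_1) = \sqrt{\tilde\gamma_1}\,\eta_1$ (both sides positive, so taking the positive root is justified), which rearranges to $\eta_1^* = \sqrt{\tilde\gamma_2}/(\sqrt{\tilde\gamma_1}+\sqrt{\tilde\gamma_2})$ and therefore $\eta_2^* = \sqrt{\tilde\gamma_1}/(\sqrt{\tilde\gamma_1}+\sqrt{\tilde\gamma_2})$, matching the claim.

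Since this is essentially a one-variable convex minimization, there is no real obstacle; the only thing to be careful about is to verify that the stationary point lies strictly inside $(0,1)$ (which is immediate because $\tilde\gamma_1,\tilde\gamma_2>0$) and that $g$ blows up at the endpoints, so the interior minimum is indeed global. As a sanity cross-check I would also note that the result can be obtained by a Cauchy--Schwarz argument: $\left(\tfrac{1}{\eta_1\tilde\gamma_1}+\tfrac{1}{\eta_2\tilde\gamma_2}\right)(\eta_1+\eta_2) \ge \left(\tfrac{1}{\sqrt{\tilde\gamma_1}}+\tfrac{1}{\sqrt{\tilde\gamma_2}}\right)^2$, with equality iff $\eta_1\sqrt{\tilde\gamma_1}=\eta_2\sqrt{\tilde\gamma_2}$, giving the same optimal ratio and confirming the answer.
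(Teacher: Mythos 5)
Your proposal is correct and follows essentially the same route as the paper: specialize the CRLB in (\ref{SimplifiedCLEB}) to $M=2$, split it into $\frac{1}{\eta_1\tilde\gamma_1}+\frac{1}{\eta_2\tilde\gamma_2}$ up to a constant factor, and solve the resulting one-variable problem via the first-order condition under $\eta_1+\eta_2=1$. You additionally verify strict convexity, the endpoint blow-up, and a Cauchy--Schwarz cross-check, which the paper's terser proof (whose stationarity condition is written somewhat loosely) leaves implicit.
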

\begin{proof}
	When $M = 2$, corresponding minimum CRLB can be given by
	\begin{equation}\label{2BSCases}
		{\rm{CRLB}} = \frac{C_0}{\sin^2(\varphi_1 - \varphi_2)}\left(\frac{1}{\eta_1 \tilde \gamma_1} + \frac{1}{\eta_2 \tilde \gamma_2}\right).
	\end{equation}
	At the optimal solution, $\frac{{\rm{CRLB}}}{\eta_1} = \frac{{\rm{CRLB}}}{\eta_2}$. Then, it can be readily proved that the optimal time allocation $\eta^*_1 = \frac{\sqrt{\tilde \gamma_2}}{\sqrt{\tilde \gamma_1} + \sqrt{\tilde \gamma_2}}$ and $\eta^*_2 = \frac{\sqrt{\tilde \gamma_1}}{\sqrt{\tilde \gamma_1} + \sqrt{\tilde \gamma_2}}$.
\end{proof}

In Proposition \ref{2BS_optimal}, ${{{{\left( {\cos {\varphi _1}\sin {\varphi _1} - \cos {\varphi _2}\sin {\varphi _2}} \right)}^2}}} \in [0, 1]$. Thus, if $\varphi_1 = \varphi_2$, ${\rm{CRLB}} \to \infty$. CRLB is maximized if $\varphi_2 - \varphi_1 = \frac{\pi}{2}$. Based on the above conclusions, the higher the deployment density of BSs, the higher the probability of achieving the optimal positioning performance in Lemma \ref{LowerBoundCRLB}. 

\subsection{Optimal Solution to (P2) under Fixed BS Deployment}
\label{OptimalFixDeployment}
According to the analysis in Section \ref{FlexibleBS}, if the location of BSs satisfies the condition in Lemma \ref{LowerBoundCRLB} or Proposition \ref{2BS_optimal}, the optimal time allocation solution with flexible BS location is also the optimal solution to (P2). In the following, we will further present an analysis of fixed BS deployment when $M \ge 3$.

\begin{Pro}\label{MonoProof}
The ${\rm{CRLB}}$ value decreases monotonically as $\eta_m$ increases.
\end{Pro}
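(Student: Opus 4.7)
The plan is to lift the scalar claim $\partial\,{\rm CRLB}/\partial\eta_m\le 0$ to a Loewner-order (positive semidefinite) statement at the level of the Fisher information matrix (FIM) itself, and then invoke the standard antitonicity of matrix inversion on the PSD cone, $A\succeq B\succ\mathbf{0}\Rightarrow B^{-1}\succeq A^{-1}$, before taking traces. Working with the matrix rather than the quotient in (\ref{SimplifiedCLEB}) avoids the rather delicate algebra of showing $Q\le S\sum_{i\ne m}x_i a_{mi}$ for the quadratic form in the denominator, which is doable but unenlightening.

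First I would rewrite the FIM in (\ref{FIMequation}) in a compact rank-one form. Substituting $\sigma_m^2=C_0/(\eta_m\bar\gamma_m^S)$ from (\ref{ReceivedPowerEquation}) into (\ref{DistnaceCovariance}), together with the Jacobian (\ref{JacobianMatrix}), collapses the matrix product into
\begin{equation*}
\mathcal{I}(\boldsymbol{p})=\frac{1}{C_0}\sum_{m=1}^{M}\eta_m\,\tilde\gamma_m\,\boldsymbol{u}_m\boldsymbol{u}_m^T,\qquad \boldsymbol{u}_m=[\cos\varphi_m,\ \sin\varphi_m]^T,
\end{equation*}
with $\tilde\gamma_m=\bar\gamma_m^S\cos^2\phi_m>0$ as defined just before (\ref{SimplifiedCLEB}). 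Each summand is rank-one PSD and every coefficient $\eta_m\tilde\gamma_m$ is nonnegative, so $\mathcal{I}$ is a nonnegative conic combination of the $\boldsymbol{u}_m\boldsymbol{u}_m^T$.

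Second, perturbing only $\eta_m$ by $\delta>0$ while holding every other $\eta_{m'}$ fixed changes the FIM by a rank-one PSD matrix,
\begin{equation*}
\mathcal{I}(\eta_m+\delta)-\mathcal{I}(\eta_m)=\frac{\delta\tilde\gamma_m}{C_0}\,\boldsymbol{u}_m\boldsymbol{u}_m^T\succeq\mathbf{0},
\end{equation*}
so $\eta_m\mapsto\mathcal{I}$ is non-decreasing in the Loewner order. Provided $\mathcal{I}\succ\mathbf{0}$---which holds as soon as two associated BSs contribute with distinct azimuths, the non-degenerate regime underlying Lemma \ref{LowerBoundCRLB} and Proposition \ref{2BS_optimal}---operator antitonicity yields $\mathcal{I}^{-1}(\eta_m+\delta)\preceq\mathcal{I}^{-1}(\eta_m)$, and monotonicity of the trace on the PSD cone then gives ${\rm CRLB}={\rm tr}(\mathcal{I}^{-1})$ non-increasing in $\eta_m$, which is the claim.

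The only subtle point will be the singular boundary where $\mathcal{I}$ loses rank---a single associated BS, or all associated azimuths aligned---where ${\rm CRLB}=\infty$ and the derivative statement must be read in the extended sense. I would handle this by proving strict decrease in the interior $\{\mathcal{I}\succ\mathbf{0}\}$ via a Sherman--Morrison expansion, which produces the explicit PSD decrement
\begin{equation*}
\mathcal{I}^{-1}(\eta_m)-\mathcal{I}^{-1}(\eta_m+\delta)=\frac{(\delta\tilde\gamma_m/C_0)\,\mathcal{I}^{-1}\boldsymbol{u}_m\boldsymbol{u}_m^T\mathcal{I}^{-1}}{1+(\delta\tilde\gamma_m/C_0)\,\boldsymbol{u}_m^T\mathcal{I}^{-1}\boldsymbol{u}_m}\succeq\mathbf{0},
\end{equation*}
and then extending to the boundary by a monotone limit, which is also how the subsequent Polyblock/MO reformulation handles the degenerate corner cases.
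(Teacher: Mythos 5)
Your proof is correct, but it takes a genuinely different route from the paper's. The paper works entirely at the scalar level: it differentiates the closed-form quotient in (\ref{SimplifiedCLEB}) with respect to $x_m=\eta_m\bar\gamma_m^S\cos^2\phi_m$ and argues that the numerator $g(\bm{x})$ of the resulting partial derivative is nonpositive, via a concavity/boundary argument on $g$. You instead lift the claim to the Fisher information matrix: writing $\mathcal{I}=\frac{1}{C_0}\sum_m\eta_m\tilde\gamma_m\bm{u}_m\bm{u}_m^T$ as a conic combination of rank-one PSD terms, observing that increasing $\eta_m$ adds a PSD increment, and invoking operator antitonicity of inversion plus monotonicity of the trace (with Sherman--Morrison supplying the explicit PSD decrement of $\mathcal{I}^{-1}$). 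Your route is cleaner and arguably more robust: it sidesteps the delicate sign analysis of $g(\bm{x})$ (which in the paper's appendix is compressed and not easy to verify term by term), it establishes the stronger Loewner-order statement $\mathcal{I}^{-1}(\eta_m+\delta)\preceq\mathcal{I}^{-1}(\eta_m)$ rather than just monotonicity of the trace, and it generalizes immediately to higher-dimensional parameters and arbitrary measurement geometries. What the paper's scalar approach buys is that it stays within the explicit two-by-two CRLB expression (\ref{DefinitionCRLB})--(\ref{SimplifiedCLEB}) that the rest of Section \ref{SingleVehicle} manipulates, so no matrix machinery is needed. Your treatment of the singular boundary (rank-deficient $\mathcal{I}$, where the CRLB is $+\infty$) is a point the paper's proof glosses over entirely, so that addition is welcome rather than redundant.
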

\begin{proof}
	Please refer to Appendix B.
\end{proof}

Intuitively, the increase of the dwell time on any BS leads to the improvement of measurement accuracy, thereby improving the positioning accuracy of the target. Proposition \ref{MonoProof} implies that (P2) is an MO problem with respect to $\eta_m$, and thus (P2) can be optimally solved based on the framework of the Polyblock-based algorithm \cite{tuy2000monotonic, zhang2013monotonic}. Specifically, during the $r$th iteration, the vector ${\bm{\eta}}^{(r)} =
[\eta_1^{(r)},\cdots,\eta^{(r)}_M]$ corresponding to the minimum CRLB is selected, and then its projection point is calculated as $\eta^{(r)}_m = \frac{\eta^{(r-1)}_m}{\sum\nolimits_{m=1}^M \eta^{(r-1)}_m}$.  After each iteration, a smaller Polyblock set can be constructed by replacing the vertices ${\bm{\eta}}^{(r)}$ with the newly generated vectors. The algorithm details are omitted for brevity, more details refer to \cite{zhang2013monotonic}. 

\begin{remark}
	This algorithm can also be extended to the case in which at least three BSs are required to be associated with, i.e., the number of non-zero elements in the set $\{ \eta_1, \cdots, \eta_M\}$ must be no less than three, and this problem is denoted by problem (P2.1). Specifically, the minimum time ratio for each BS to achieve effective sensing is denoted by $t_m$. If the obtained number of associated BSs for the above algorithm is two, with optimal time allocation $\eta_i^*$ and $\eta_j^*$, $i,j \in {\cal{M}}$, we can construct an optimal solution to (P2.1) as follows. The measurement from another BS with the minimum resulting CRLB is added, denoted by BS $q$, and then the dwell time of the original solution to problem (P2) is decreased in scale. As a result, the optimal solution to (P2.1) can be obtained by allocating time ratio $t_m$ to BS $q$, and the optimal time ratio for BSs $i$ and $j$ are ${\eta_i^*(1-t_m)}$ and ${\eta_j^*(1-t_m)}$, respectively. 
\end{remark} 

For ease of discussion, we shall take ${\rm{CRLB}}^*_m$ for the optimal solution where the maximum number of BSs allowed to be associated is $m$. Next, it is illustrated that each target can achieve effective positioning even associating with only a few BSs.

\begin{Pro}\label{SensingAccuracy}
	If $\tilde \gamma_{m} = \tilde \gamma_{m'}$, $\forall m, m' \in {\cal{M}}$, it follows that 
	\begin{equation}\label{DeriveConclusion}
		{{\rm{CRLB}}^*_2} \le 2{ \lim_{M \rightarrow \infty} {\rm{CRLB}}^*_M}.
	\end{equation}
\end{Pro}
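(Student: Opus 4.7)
The plan is to exploit the simplified CRLB in \eqref{SimplifiedCLEB}, combined with the closed-form two-BS result from Proposition \ref{2BS_optimal}, and derive the desired factor-of-two bound via a single elementary inequality on a quadratic form over the simplex. First, I would substitute $\tilde{\gamma}_m = \tilde{\gamma}$ for all $m$ into \eqref{SimplifiedCLEB} and use $\sum_m \eta_m = 1$ to cancel the numerator, obtaining the compact form
\begin{equation*}
{\rm{CRLB}} \;=\; \frac{C_0}{\tilde{\gamma}\,\sum_{1\le j<i\le M} \eta_j\,\eta_i\,\sin^2(\varphi_i-\varphi_j)}.
\end{equation*}
Hence ${\rm{CRLB}}^*_M$ is obtained by maximizing the quadratic form $Q({\bm{\eta}}):=\sum_{j<i} \eta_j\eta_i\sin^2(\varphi_i-\varphi_j)$ over the simplex.

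Next, I would recall from Proposition \ref{2BS_optimal} that with equal $\tilde{\gamma}_m$ the optimal two-BS allocation is $\eta_i=\eta_j=1/2$, giving ${\rm{CRLB}} = 4C_0/(\tilde{\gamma}\sin^2(\varphi_i-\varphi_j))$. Since in the definition of ${\rm{CRLB}}^*_2$ we are free to choose \emph{which} pair of BSs to associate, the optimum is achieved by the pair maximizing $\sin^2(\varphi_i-\varphi_j)$, i.e.,
\begin{equation*}
{\rm{CRLB}}^*_2 \;=\; \frac{4C_0}{\tilde{\gamma}\, S^*}, \qquad S^*:=\max_{i,j}\sin^2(\varphi_i-\varphi_j).
\end{equation*}

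The key step is a uniform upper bound on $Q({\bm{\eta}})$. Factoring out $S^*$ and using the standard identity $2\sum_{j<i}\eta_j\eta_i = (\sum_m \eta_m)^2 - \sum_m \eta_m^2 \le 1$, I would conclude
\begin{equation*}
Q({\bm{\eta}}) \;\le\; S^*\sum_{j<i}\eta_j\eta_i \;\le\; \frac{S^*}{2},
\end{equation*}
which gives ${\rm{CRLB}}^*_M \ge 2C_0/(\tilde{\gamma}S^*) = \tfrac12\,{\rm{CRLB}}^*_2$ for every $M$. Taking $M\to\infty$ preserves the inequality and yields \eqref{DeriveConclusion}.

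The main obstacle is conceptual rather than technical: one must recognize that the quantity $S^*$ appearing in the uniform upper bound on $Q({\bm{\eta}})$ is exactly the same angular-separation quantity that defines the best-pair two-BS CRLB, which is what makes the constant in the bound come out to precisely $2$ rather than some looser value. Once that alignment is noted, the proof reduces to the simplex identity $\sum_{j<i}\eta_j\eta_i\le 1/2$ and a routine substitution; no analysis of the sequence $\{{\rm{CRLB}}^*_M\}_M$ beyond the uniform bound is required.
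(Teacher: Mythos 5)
Your proof is correct and follows essentially the same route as the paper: write the equal-gain CRLB as $C_0/(\tilde\gamma\,Q(\bm{\eta}))$, bound $Q(\bm{\eta})\le S^{*}\sum_{j<i}\eta_j\eta_i$, and compare with the best-pair two-BS value $4C_0/(\tilde\gamma S^{*})$. The only (minor, and arguably favorable) difference is that you use the uniform simplex bound $\sum_{j<i}\eta_j\eta_i\le 1/2$, giving ${\rm{CRLB}}^*_2\le 2\,{\rm{CRLB}}^*_M$ for every finite $M$, whereas the paper bounds this sum by $(M-1)/(2M)$ and only recovers the factor $2$ in the limit $M\to\infty$.
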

\begin{proof}
	Furthermore, when there are only two BSs that can be associated with the IRS, we have ${\rm{CRLB}}^*_2 = \frac{4C_0}{\tilde \gamma_m \mathop {\max }\limits_{i \ne j, i, j \in {\cal{M}}} \sin^2(\varphi_i - \varphi_j) }$. When $M \rightarrow \infty$, we have 
	\begin{align}\label{InfityPerformance}
		&\lim_{M \rightarrow \infty} {\rm{CRLB}}^*_M \nonumber \\
		\ge&  \frac{C_0}{\tilde \gamma_{m} \mathop {\max }\limits_{i \ne j, i, j \in {\cal{M}}} \sin^2(\varphi_i - \varphi_j) \lim_{M \rightarrow \infty} \sum\nolimits_{i = 1}^{M-1} \sum\nolimits_{j = i + 1}^M \eta_i \eta_{j}} \nonumber \\
	 \ge& \frac{C_0}{\tilde \gamma_{m} \mathop {\max }\limits_{i \ne j, i, j \in {\cal{M}}} \sin^2(\varphi_i - \varphi_j) } \lim_{M \rightarrow \infty} \frac{2M^2}{M^2 - M} \nonumber  \\
		\overset{(a)}{=} &  \frac{2C_0}{\tilde \gamma_{m} \mathop {\max }\limits_{i \ne j, i, j \in {\cal{M}}} \sin^2(\varphi_i - \varphi_j) } \triangleq \frac{1}{2}{{\rm{CRLB}}^*_2},
	\end{align} 
where ($a$) holds due to $\lim_{M \rightarrow \infty} \frac{2M^2}{M^2 - M} = 2$. Combining the above analysis, we have (\ref{DeriveConclusion}) and complete the proof.
\end{proof}

Proposition \ref{SensingAccuracy} illustrates that ignoring the error difference of different BSs' measurements, when the number of BSs is large enough, the optimal CRLB of the solution with only two BSs will not exceed twice that of the optimal solution with countless BSs. Combining the conclusions in Propositions \ref{2BS_optimal} and \ref{SensingAccuracy}, it can be derived that the IRS tends to be associated with fewer BSs, thereby focusing the reflecting power to improve sensing performance, which is also evaluated via simulations in Section \ref{SimulationSection}. This conclusion offers valuable insights into optimizing the deployment density of BSs for positioning services. Specifically, it suggests that deploying sparse and well-distributed BSs in the reserved region is more cost-efficient.

\section{Multi-Vehicle Localization}
\label{MultiVehicle}
In this section, we propose a low-complexity resource allocation algorithm for multi-vehicle cases, and then present the performance analysis for the proposed algorithm.
\subsection{Algorithm Design}

Similar to the adopted beam flattening technique for single-vehicle cases, if $b_{k,m,n} = 1$, the phase shift of IRS $k$ is designed to cover the spatial span $\left[\bar \Phi_{k,m}, \underline \Phi_{k,m}\right]$ and $\left[\bar \Omega_{k,m}, \underline \Omega_{k,m}\right]$ along $x$- and $y$-axis, respectively. In this case, the echo signals will also be received by other BSs within this angular range relative to target $k$, resulting in an uncertain association between targets and measured distance. To tackle this issue, the interference caused by unassociated IRS should be avoided according to the spatial relationship between IRSs and BSs. Here, the interference relationship can be described according to the prior location of vehicles. Specifically, if BSs $m$ and IRS $k$ are associated at a certain time slot, and the adopted beam flattening techniques (c.f. Section \ref{SingleVehicle}) will cause interference to BS $m'$ if $\left[\bar \Phi_{k,m}, \underline \Phi_{k,m}\right] \cap \Phi_{k,m'} \ne \emptyset$ and/or $\left[\bar \Omega_{k,m}, \underline \Omega_{k,m}\right] \cap \Omega_{k,m'} \ne \emptyset$, and in this case, let $w_{k,m,m'} = 1$; otherwise, $w_{k,m,m'} = 0$. Notice that if $b_{k,m,n} = 1$, BS $m'$ should not work at the $n$th time slot if $w_{k,m,m'} = 1$. Based on the established interference graph $\{w_{k,m,m'}\}$, the interference can be avoided if the following constraints are satisfied, i.e.,
\begin{equation}\label{InterferenceAvoidance}
	b_{k,m,n} + \sum\nolimits_{k' \ne k}^K   b_{k',m',n} \le 2 - w_{k,m,m'}, \forall k, m, n.
\end{equation}
By satisfying the constraints in (\ref{InterferenceAvoidance}), the interference from unassociated IRS can be ignored due to the weak reflecting power outside the main lobe. Then, the SINR of the received echos can be expressed as
\begin{equation}\label{SensingReceivedPowerT}
	\gamma^{\mathrm{S}}_{k,m} = \sum\nolimits_{n = 1}^N b_{k,m,n}  \eta_{n} \bar \gamma_{k,m}^S,
\end{equation}
where $\bar \gamma_{k,m}^S = \frac{\beta^2_0 \Delta T L^2}{ Q_{k,m}^2 \Delta t  d_{k,m}^4 \sigma_s^2}$. Let $\tilde \gamma_{k,m}^S = \gamma_{k,m}^S \cos^2\phi_{k,m}$.
Then, the CRLB of target $k$ can be given by
\begin{equation}\label{MultiTarget_CRLB}
	{\rm{CRLB}}_k = \frac{ C_0 {\sum\nolimits_{m = 1}^M \tilde \gamma_{k,m}^S  }}{{\sum\nolimits_{j = 1}^{M-1} { \tilde \gamma_{k,j}^S  \left( {\sum\nolimits_{i = j+1  }^M {{{ \tilde \gamma_{k,i}^S  {\sin^2 (\varphi_{k,i} - \varphi_{k,j})}}}} } \right)} }}.
\end{equation}
Thus, (P1) can be equivalently transformed into 
\begin{alignat}{2}
	\label{P3}
	(\rm{P3}): \quad & \begin{array}{*{20}{c}}
		\mathop {\min }\limits_{{\bm{\eta}}, \{b_{k,m,n}\},N} \quad     \xi
	\end{array} & \\ 
	\mbox{s.t.}\quad
	& (\ref{P1}b)-(\ref{P1}e), \nonumber \\
	& {\rm{CRLB}}_k \le \xi, \forall k \in {\cal{K}}, \tag{\ref{P3}a} \\
	& b_{k,m,n} + \sum\nolimits_{k' \ne k}^M   b_{k',m',n} \le 2 - w_{k,m,m'}, \forall k , m, n.  & \tag{\ref{P3}b}
\end{alignat}
Note that under the optimal association $\{b^*_{k,m,n}\}$, it can be readily verified that the optimal time allocation $\{\eta_m\}$ can be solved by the MO-based algorithm presented in Section \ref{SingleVehicle}. Thus, a direct way to find the optimal solution to (P3) is to search all possible BS-IRS associations and calculate the corresponding optimal time allocation through the MO-based algorithm in Section \ref{OptimalFixDeployment}, and then the BS-IRS association with the minimum CRLB value is the optimal solution. However, such an operation is infeasible due to the high computational complexity, especially when the number of targets/BSs is large. On the other hand, it is difficult to jointly optimize target association $\{b_{k,m,n}\}$ and time allocation $\{\eta_n\}$ due to the non-convex constraints and closely coupled variables in ({\ref{P3}a}). To this end, we propose a heuristic BS-IRS association algorithm according to the derived conclusion in Section \ref{SingleVehicle}. 

According to the analysis in Section \ref{SingleVehicle}, the sensing performance of the solution with only two BSs approaches to that of the optimal solution. Thus, we propose a two-step algorithm to solve target association and time allocation, where only two BSs with the minimum CRLB value are selected to localize each vehicle, thereby reducing the total number of time slots and the algorithm complexity. In this case, the total number of BS-IRS associations is $2K$, and the probability of potential interference is reduced. The associated BSs with IRS $k$ are denoted by $k_1$ and $k_2$, and according to Proposition \ref{2BS_optimal}, the optimal ${\rm{CRLB}}^*_k$ is only affected by these two association time ratios, denoted by $\eta_{k,1}$ and $\eta_{k,2}$. Then, we normalize the resulting CRLB of each target, and the corresponding normalized time ratios can be denoted by $\bar \eta_{k,1}$ and $\bar \eta_{k,2}$, respectively, where $\bar \eta_{k,1} = {\rm{CRLB}}^*_k\eta_{k,1}$ and $\bar \eta_{k,2} = {\rm{CRLB}}^*_k\eta_{k,2}$. In this case, the CRLB is the same for each target using the normalized time ratio, thereby ensuring the fairness of multi-vehicle sensing. To make full use of space resources and achieve a better resource allocation effect, the BS-IRS associations are sorted according to the normalized time ratio, and then place in the time slots sequentially based on the interference-free graph $\{w_{k,m,m'}\}$ to avoid interference between different measurements. If the current BS-IRS association cannot be accommodated within the existing time slot set, a new time slot is generated, and the value of $N$ is incremented by one. After updating the time slot, the BS-IRS associations are sorted based on the expected length of the time slots, thereby enhancing utilization efficiency of time resources. The algorithm details are given in Algorithm \ref{LowComplexity}.

\begin{algorithm}[t]
	\caption{Low-Complexity Association Algorithm}
	\label{LowComplexity}
	\begin{algorithmic}[1]
		\STATE Input BS location, prior location of vehicles.
		\FOR {$k = 1:K$}
		\STATE Choose two BSs with minimum CRLB for target $k$, and obtain the optimal time allocation ratios according to Proposition \ref{2BS_optimal}, denoted by $\eta_{k,1}$ and $\eta_{k,2}$;
		\ENDFOR
		\WHILE {$2K$ BS-IRS associations}
		\STATE 
		Put the BS-IRS association results with fewer time ratio ($\min (\eta_{k,1},\eta_{k,2})$) into the first time slot. 
		\IF {$\sum\nolimits_{m' \ne m}^M w_{k,m,m'} =0$} 
		\STATE Continue to put it in.
		\ELSE
		\STATE $N \to N + 1$, put this association into the new added time slot.
		\STATE Sort the BS-IRS associations according to the normalized time ratio.
		\ENDIF
		\STATE After determining the target-BS association, the time allocation can be solved by the MO algorithm.
		\ENDWHILE
	\end{algorithmic}
\end{algorithm}

\subsection{Performance Analysis for Proposed Localization Scheme}
To draw useful insights into the proposed IRS-assisted localization scheme, in this subsection, we provide theoretical analysis to characterize the relationship between the sensing performance and the numbers of target/BSs/time slots ($K$/$M$/$N$), which is useful to facilitate the practical setup design for the proposed localization scheme. 

If $\sum\nolimits_{k = 1}^K \sum\nolimits_{m = 1}^M \sum\nolimits_{m' \ne m}^M w_{k,m,m'} =0$, this corresponds to the case that the location of BSs and vehicles is relatively dispersed, and any two BS-IRS association can exist simultaneously. In this case, the space resources can be fully utilized. On the other hand, if $\sum\nolimits_{k = 1}^K \sum\nolimits_{m = 1}^M \sum\nolimits_{m' \ne m}^M w_{k,m,m'}
 = KM(M-1)$, i.e., there is mutual interference in any pair of BS-IRS associations, then each measurement can only be performed in a time-division manner. For these two cases, the corresponding relationship between the minimum number of time slots and the maximum number of available targets is given as follows.

\begin{thm}\label{MinimumN_tmeslots}
	Given the number of BSs and targets, i.e., $M$ and $K$, the minimum number of time slots required to achieve effective target localization are given by 
	\begin{itemize}[leftmargin=*]
		\item $N_{\min} = \max(\left\lceil {\frac{2 K}{M} } \right\rceil, 2) $, if $\sum\nolimits_{k \!=\! 1}^K \sum\nolimits_{m \!=\! 1}^M \sum\nolimits_{m' \ne m}^M w_{k,m,m'} = 0$;
		\item $N_{\min} = {{2 K} } $, if $\sum\nolimits_{k = 1}^K \sum\nolimits_{m = 1}^M \sum\nolimits_{m' \ne m}^M w_{k,m,m'} = KM(M-1)$.
	\end{itemize}
\end{thm}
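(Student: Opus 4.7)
The plan is to establish matching lower and upper bounds on $N_{\min}$ in each interference regime by combining the two-BS-per-target principle from Section~\ref{SingleVehicle} (as formalised in Algorithm~\ref{LowComplexity}) with the scheduling constraints (\ref{P1}d)--(\ref{P1}f) and, when active, the interference rule (\ref{P3}b). Under this principle the scheduler must pack $2K$ distinct target-BS pairs into $N$ slots so that no slot contains two pairs sharing a target or a BS.

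A universal lower bound $N_{\min} \ge 2$ follows directly from (\ref{P1}e): for any target $k$, its two associated pairs $(k,k_1)$ and $(k,k_2)$ must lie in distinct slots, otherwise two different BSs would be simultaneously associated with the same target. For the interference-free case $\sum_{k,m,m'} w_{k,m,m'} = 0$, constraint (\ref{P3}b) is vacuous and each slot accommodates up to $M$ pairs (one per BS, with distinct targets by (\ref{P1}d)); counting the $2K$ pairs yields $N_{\min} \ge \lceil 2K/M \rceil$, so $N_{\min} \ge \max(\lceil 2K/M \rceil, 2)$. I would prove tightness by modelling the pair selection as a bipartite multigraph $G$ with $K$ target vertices (each of degree two) and $M$ BS vertices. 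Because the two BSs per target may be chosen freely, a balanced round-robin assignment keeps every BS-vertex degree at most $\lceil 2K/M \rceil$, giving $\Delta(G)=\max(\lceil 2K/M \rceil, 2)$; K\"onig's edge-colouring theorem then supplies a proper edge colouring of $G$ with $\Delta(G)$ colours, each colour class becoming one time slot that automatically satisfies (\ref{P1}d)--(\ref{P1}f).

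For the full-interference case $\sum_{k,m,m'} w_{k,m,m'} = KM(M-1)$, constraint (\ref{P3}b) together with (\ref{P1}d) and (\ref{P1}e) forbids any two simultaneously active associations within a slot; thus at most one pair is scheduled per slot, the $2K$ required pairs demand $N_{\min} \ge 2K$, and achievability is immediate by sequencing all pairs one per slot.

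The principal obstacle is the achievability argument in the first case: one must verify that the twin freedoms of choosing each target's BS pair and then scheduling the resulting multigraph can be jointly exploited to attain the lower bound. I would handle this in two stages---a greedy balanced assignment of BSs to targets so that BS-side load is at most $\lceil 2K/M \rceil$, followed by the constructive proof of K\"onig's theorem---and separately address the boundary regime $K \le M/2$, where $\lceil 2K/M \rceil = 1$ but the target-side degree forces $N = 2$, by splitting the two pairs of each target explicitly across the two slots.
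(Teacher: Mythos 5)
Your proposal is correct, and its lower-bound half is essentially the paper's argument: the paper likewise counts $2K$ required measurements, caps each slot at $M$ (resp.\ one) simultaneous pairs in the interference-free (resp.\ full-interference) regime, and reads off $\lceil 2K/M\rceil$ and $2K$. Where you genuinely depart from the paper is on achievability and on the $\max(\cdot,2)$ term. The paper simply asserts that every slot can be filled with $M$ measurements ($\sum_{k,m} b_{k,m,n}=M$ for all $n$) and never explicitly derives why two slots are needed even when $\lceil 2K/M\rceil=1$; you instead extract the bound $N_{\min}\ge 2$ cleanly from constraint (\ref{P1}e) and then give a constructive schedule via a balanced round-robin choice of two distinct BSs per target followed by K\"onig's edge-colouring of the resulting bipartite graph, whose maximum degree is exactly $\max(\lceil 2K/M\rceil,2)$ and whose colour classes are matchings satisfying (\ref{P1}d)--(\ref{P1}f) by construction. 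This buys a rigorous proof of tightness that the paper's two-line counting argument leaves implicit, at the cost of invoking an external combinatorial theorem; the paper's version is shorter but, read strictly, only establishes the lower bound. One small point worth making explicit in your write-up: because each target is assigned two \emph{distinct} BSs, the graph is simple, so the classical bipartite edge-colouring result applies directly and constraint (\ref{P1}f) (each target--BS pair used at most once) is automatic.
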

\begin{proof}
	First, we have $\sum\nolimits_{n = 1}^N \sum\nolimits_{m = 1}^M b_{k,m,n}  \ge 2$, i.e., the minimum number of BSs to successfully localize a target is two, and thus, the minimum total number of distance measurements satisfies $\sum\nolimits_{n = 1}^N \sum\nolimits_{k = 1}^K \sum\nolimits_{m = 1}^M b_{k,m,n} = 2 K $.
	Moreover, the maximum number of the associated targets (the number of distance measurements) at each time slot is less than the number of BSs and targets, i.e., $\sum\nolimits_{k = 1}^K \sum\nolimits_{m = 1}^M b_{k,m,n} \le \min(M, K)$. Hence, if there is no interference between different IRSs, i.e., $\sum\nolimits_{k = 1}^K \sum\nolimits_{m = 1}^M \sum\nolimits_{m' \ne m}^M w_{k,m,m'} = 0$, we have $\sum\nolimits_{k = 1}^K \sum\nolimits_{m = 1}^M b_{k,m,n} = M, \forall n$. Thus, in this case, the minimum number of time slots is $N_{\min} = \left\lceil {\frac{2 K}{M} } \right\rceil $. Similarly, if all targets interfere with each other, $\sum\nolimits_{k = 1}^K \sum\nolimits_{m = 1}^M \sum\nolimits_{m' \ne m}^M w_{k,m,m'} = KM(M-1)$, and they have to be sensed in a time-division manner. In this case, $N_{\min} = 2 K$.
\end{proof}

Lemma \ref{MinimumN_tmeslots} illustrates the minimum number of time slots under different interference cases. Intuitively, the less interference, the number of BSs can be exploited to reduce the number of slots, thereby reducing the complexity of positioning algorithm. 

\begin{thm}\label{MaximumK_tmeslots}
	Given the number of BSs and time slots, i.e., $M$ and $N$, the maximum number of targets that can be effectively localized is given by 
	\begin{itemize}[leftmargin=*]
		\item $K_{\max} \!=\! \left\lfloor {\frac{M N}{2}} \right\rfloor $, if $\sum\nolimits_{k \!= 1}^K \sum\nolimits_{m = 1}^M \sum\nolimits_{m' \ne m}^M w_{k,m,m'} =0$;
		\item $K_{\max} = \left\lfloor {\frac{N}{2}} \right\rfloor $, if $\sum\nolimits_{k = 1}^K \sum\nolimits_{m = 1}^M \sum\nolimits_{m' \ne m}^M w_{k,m,m'} = KM(M-1)$.
	\end{itemize}
\end{thm}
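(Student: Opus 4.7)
The plan is to mirror the counting argument used in the proof of Lemma~\ref{MinimumN_tmeslots}, but now inverting the roles of $K$ and $N$: instead of fixing $K$ and lower-bounding $N$, I fix $M$ and $N$ and upper-bound $K$. Every effectively localized target must be associated with at least two distinct BSs, as established in Section~\ref{SingleVehicle}, so the total number of active BS-IRS associations across the whole horizon satisfies
\[
\sum_{n=1}^N \sum_{k=1}^K \sum_{m=1}^M b_{k,m,n} \;\ge\; 2K.
\]
This provides the lower bound on resource consumption that I will match against the per-slot capacity.

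Next I would bound the per-slot association capacity in each interference regime. In the interference-free case, constraint~(\ref{P1}d) allows each of the $M$ BSs to serve a distinct target within a single slot, so the per-slot capacity is $M$ and the total capacity across $N$ slots is $MN$. Combining with the counting lower bound above yields $2K \le MN$, hence $K_{\max} = \lfloor MN/2 \rfloor$. In the fully interfering case, any second active BS-IRS pair in the same slot would violate~(\ref{P3}b) because $w_{k,m,m'} = 1$ for every triple, so the per-slot capacity collapses to one; the $N$-slot total capacity is therefore $N$ and $2K \le N$, giving $K_{\max} = \lfloor N/2 \rfloor$.

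The main obstacle is achievability rather than necessity. The counting inequalities only yield upper bounds on $K_{\max}$; to claim equality I must exhibit a feasible schedule $\{b_{k,m,n}\}$ that attains them while respecting~(\ref{P1}d)--(\ref{P1}f). For the full-interference regime this is immediate: place each of the $2K$ required associations in its own dedicated slot. For the interference-free regime I would construct the schedule explicitly, for instance by pairing targets with disjoint pairs of BSs and scheduling the pairs across slots so that no BS or target is reused within a single slot; alternatively, a bipartite-matching argument between targets and the pool of $(n,m)$ resources guarantees that a valid packing exists whenever $2K \le MN$. Invoking Proposition~\ref{2BS_optimal}, which certifies that two judiciously chosen BSs suffice for effective positioning, confirms that each such pair actually localizes its target and therefore closes the proof.
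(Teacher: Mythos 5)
Your proposal is correct and follows essentially the same route as the paper, which simply states that the lemma "can be similarly proved as Lemma~\ref{MinimumN_tmeslots}" — i.e., the same counting of the $2K$ required associations against the per-slot capacity ($M$ in the interference-free case, $1$ in the fully interfering case). Your explicit attention to achievability (constructing a feasible schedule, and noting that two BSs per target suffice via Proposition~\ref{2BS_optimal}) is actually more complete than what the paper provides; the only caveat worth flagging is the degenerate case $N=1$, where constraint~(\ref{P1}e) forbids any target from acquiring two measurements, so the formula implicitly assumes $N\ge 2$, consistent with the $\max(\cdot,2)$ appearing in Lemma~\ref{MinimumN_tmeslots}.
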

\begin{proof}
	This Lemma can be similarly proved as Lemma \ref{MinimumN_tmeslots} and the details are omitted for brevity.
\end{proof}

Lemmas \ref{MinimumN_tmeslots} and \ref{MaximumK_tmeslots} offer essential insights into the relationship between the number of targets, BSs, and time slots in the setup, which can help guide the parameter setting process effectively.

\begin{Pro}\label{FundamentalTradeoff}
	Under the same setup, the minimum CRLB in (P3) is non-increasing/decreasing as the number of targets/BSs decreases.
\end{Pro}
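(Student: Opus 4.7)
\textbf{Proof proposal for Proposition \ref{FundamentalTradeoff}.}
The plan is to prove both monotonicity claims by nested-feasibility arguments rather than by any direct calculation on the CRLB expression in (\ref{MultiTarget_CRLB}). Since (P3) is a max--min CRLB minimization, it suffices to show that the feasible set of the $(K{-}1)$-target instance strictly contains a ``restriction'' of the optimal $K$-target solution (with an objective no larger than that of the $K$-target optimum), and dually that the feasible set of the $M$-BS instance contains an ``extension'' of the optimal $(M{-}1)$-BS solution (with an identical objective).

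For the target direction, let $\bigl(\{b^*_{k,m,n}\}, \bm{\eta}^*, N^*\bigr)$ be optimal for (P3) with $K$ targets and value $\xi^*_K$. Construct a candidate for the $(K{-}1)$-target problem by dropping, say, target $K$: set $b'_{k,m,n} = b^*_{k,m,n}$ for $k\le K{-}1$, keep $\bm{\eta}^*$ and $N^*$ unchanged, and reuse the same IRS phase-shift design for the remaining targets. Constraints (\ref{P1}b)--(\ref{P1}g) are inherited because every sum is taken over a subset of the original index set, and $\sum_n\eta_n^*=1$ still holds. The interference constraints (\ref{P3}b), indexed by $(k,m,m')$ with $k\in\{1,\dots,K{-}1\}$, were satisfied under the larger set of active associations and can only be slackened once target $K$'s associations are removed. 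Since the individual ${\rm CRLB}_k$ in (\ref{MultiTarget_CRLB}) depends only on target $k$'s own associations and its own time allocations, the values of ${\rm CRLB}_k$ for $k\le K{-}1$ are unchanged, and
\begin{equation*}
\xi^*_{K-1} \;\le\; \max_{k\le K-1}{\rm CRLB}_k \;\le\; \max_{k\le K}{\rm CRLB}_k \;=\; \xi^*_K.
\end{equation*}

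For the BS direction, run the argument in reverse. Let the optimum of the $(M{-}1)$-BS instance be achieved by $\bigl(\{b^\star_{k,m,n}\}, \bm{\eta}^\star, N^\star\bigr)$ with value $\xi^\star_{M-1}$. Extend it to the $M$-BS instance by introducing the additional BS with index $M$ and setting $b''_{k,M,n}=0$ for all $k,n$, leaving every other association and every $\eta_n^\star$ unchanged. All constraints (\ref{P1}d)--(\ref{P1}f) are preserved (each sum in $m$ simply gains a zero term); the interference constraints (\ref{P3}b) involving the new BS are vacuous because $w_{k,m,M}$ never multiplies a nonzero $b''_{k,M,n}$; and every ${\rm CRLB}_k$ coincides with its $(M{-}1)$-BS value. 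Hence $\xi^*_M\le\xi^\star_{M-1}$, which is exactly the non-decrease of the optimal CRLB as $M$ decreases.

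The only subtle step is checking that the interference graph $\{w_{k,m,m'}\}$ behaves as claimed under these set-theoretic modifications. Removing a target deletes the triples $(K,m,m')$ from the binding set of (\ref{P3}b), never creating a new violation; adding an unused BS introduces new triples $(k,m,M)$ and $(k,M,m')$, but each of these has the corresponding $b''$ equal to zero, so the inequality $b+\sum b' \le 2-w$ holds trivially. Everything else is a bookkeeping exercise on (\ref{P1}b)--(\ref{P1}g) and (\ref{MultiTarget_CRLB}); no optimization over $\bm{\eta}$ or reinvocation of the MO-based algorithm in Section \ref{OptimalFixDeployment} is needed because the constructed solutions are feasible (not necessarily optimal) in the reduced/extended problem, which is enough to bound the optimum in the desired direction.
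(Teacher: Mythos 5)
Your proof is correct and follows essentially the same route as the paper's: a nested-feasibility argument that restricts the optimal $K$-target solution to $K{-}1$ targets and extends the optimal $(M{-}1)$-BS solution to $M$ BSs with the new BS unused, in each case checking that the constraints (including the interference constraints in (\ref{P3}b)) remain satisfied and the per-target CRLBs are unchanged. The paper states this more tersely; your version simply fills in the bookkeeping.
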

\begin{proof}
	First, the optimal solutions to (P3) with $K$ targets are denoted by $\{b^*_{k,m,n}\}$ and $\eta^*_n$, and the corresponding CRLB is denoted by ${\rm{CRLB}}^*$. With $K-1$ targets, it is not difficult to verify that the solutions $b^*_{k,m,n}$ and $\eta^*_n$ can also achieve the localization performance ${\rm{CRLB}}^*$ since the interference constraints can be satisfied. Similarly, it can be verified that the minimum CRLB in (P3) under $M$ BSs is no more than that under $M-1$ BSs.
\end{proof}

Proposition \ref{FundamentalTradeoff} reveals that the performance relationship between the deployment density of BSs, the number of targets to be sensed, and the localization performance requirements.

\section{Simulations}
\label{SimulationSection}	

To validate our analysis and characterize the performance of the proposed localization scheme, Monte Carlo simulation results are presented in this section. The system parameters are given as follows: $L_x = L_y = 40$, $K = 10$, $M = 4$, $\beta_0 = -30$dB, $\sigma^2_s = -80$dB, $P_{\mathrm{A}} = 1$W, $D_e = 5$m, $H_{BS} = 5$m, $H_{IRS} = 1$m, $\Delta T = 0.1$s, $\Delta t = 10^{-6}$s, and $C_0 = 0.1$. In addition, the following benchmark schemes are taken into account for comparison:
\begin{itemize}[leftmargin=*]
	\item {\bf{Average time allocation to all BSs (Average)}}: Each BS is assigned the same time ratio.
	\item {\bf{Closest two BSs (Closet)}}: The two closest BSs to each target are selected to be associated with the IRS.
	\item {\bf{Time division sensing (Time Division)}}: Each target is sensed on orthogonal time slots, with optimal time allocation separately.
\end{itemize}

\begin{table}
	\centering
	\caption{Statistics on the number of optimal associated BSs.}
	\label{Table1}
	\begin{tabular}{cccccc} 
		\toprule   
		Associated BS number & 1 & 2 & 3 & 4 & 5-10 \\
		\toprule    
		Proportion  & 0 & 90\% & 8\% & 2\% & 0 \\
		\toprule    
		Minimum value $\{\eta_m\}$  & - & 0.29 & 0.018 & 0.001 & - \\
		\toprule 
	\end{tabular}
\end{table}

\begin{figure*}[t]
	\centering
	\subfigure[Location of BSs and the target.]
	{	
		\label{figure7a}
		\includegraphics[width=7.2cm]{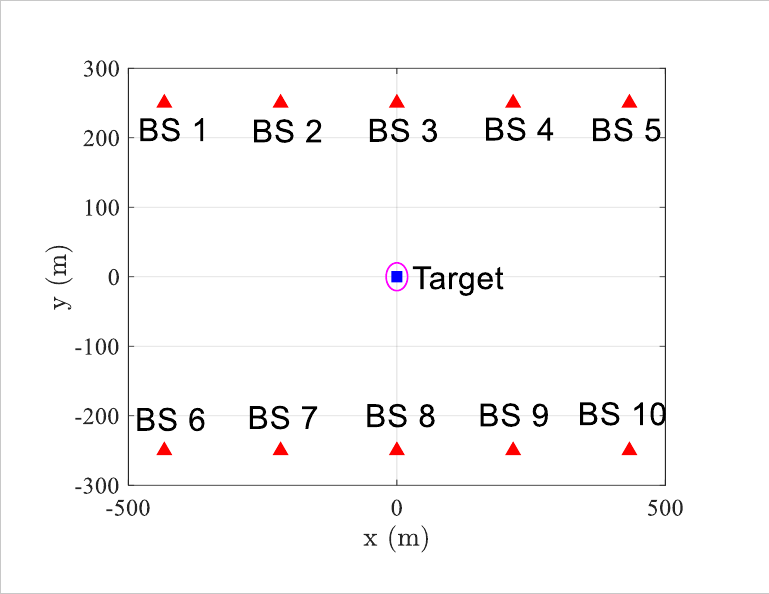}
	}
	\subfigure[Echo power at time slot 1.]
	{	
		\label{figure7b}
		\includegraphics[width=7.2cm]{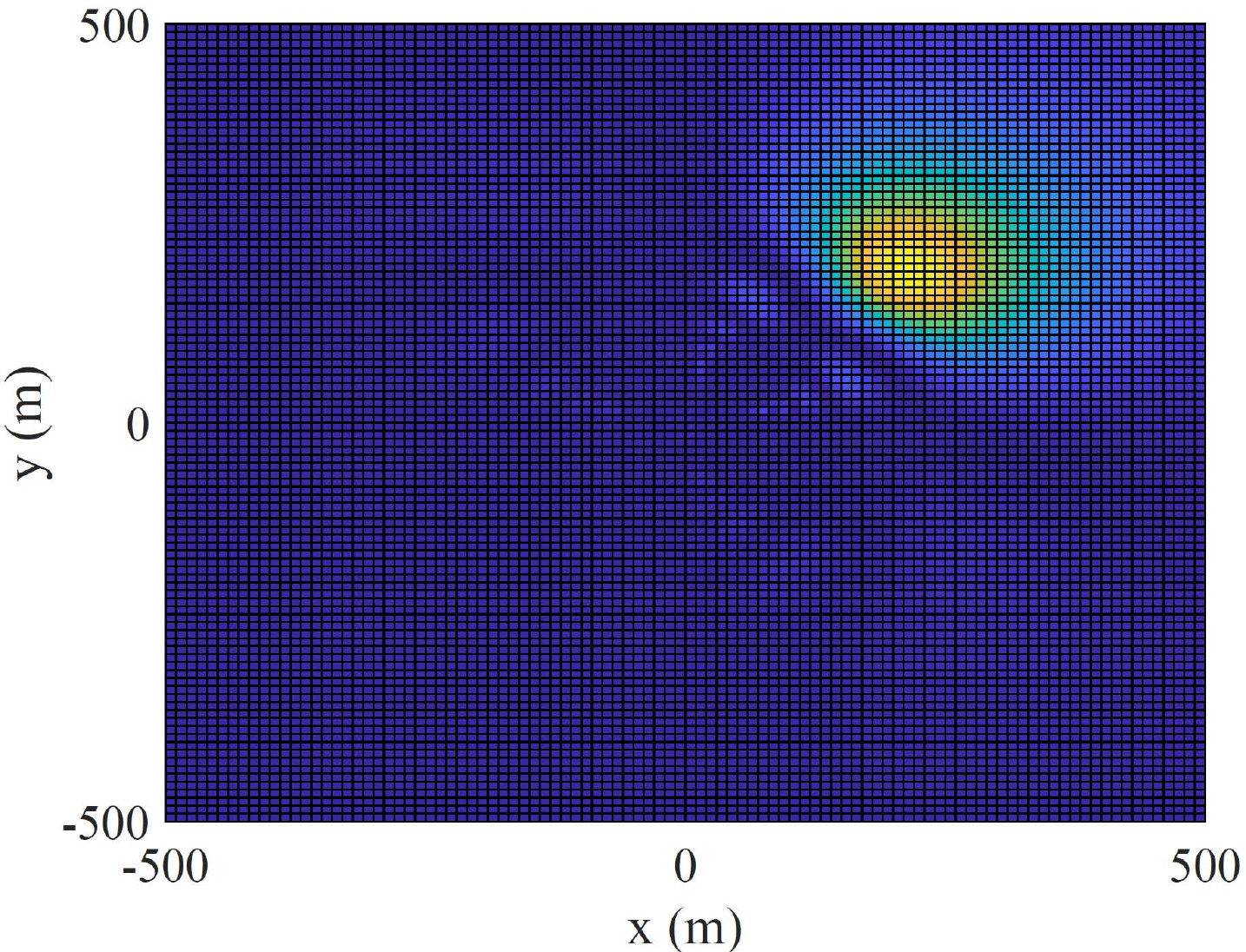}
	}
	\subfigure[Echo power at time slot 2.]
	{	
		\label{figure7c}
		\includegraphics[width=7.2cm]{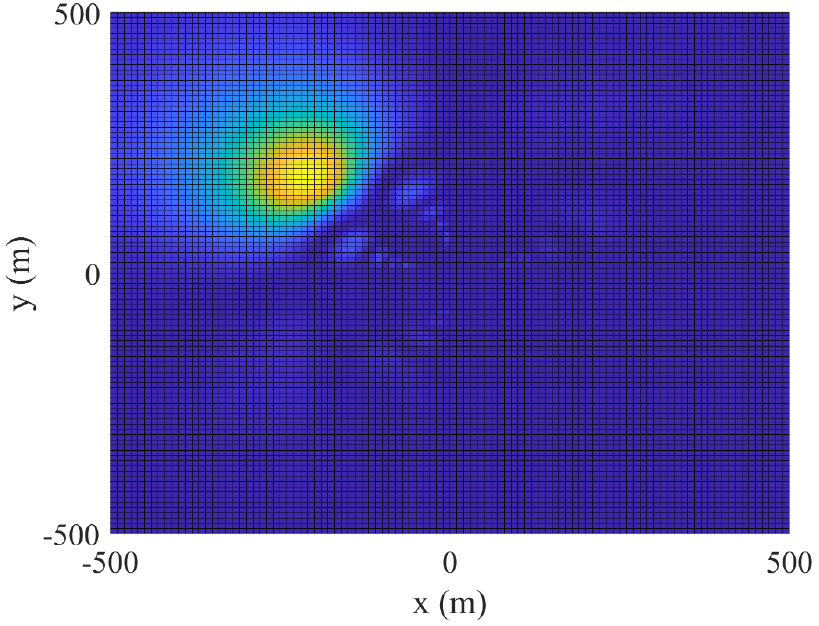}
	}
	\subfigure[Echo power at time slot 3.]
	{	
		\label{figure7d}
		\includegraphics[width=7.2cm]{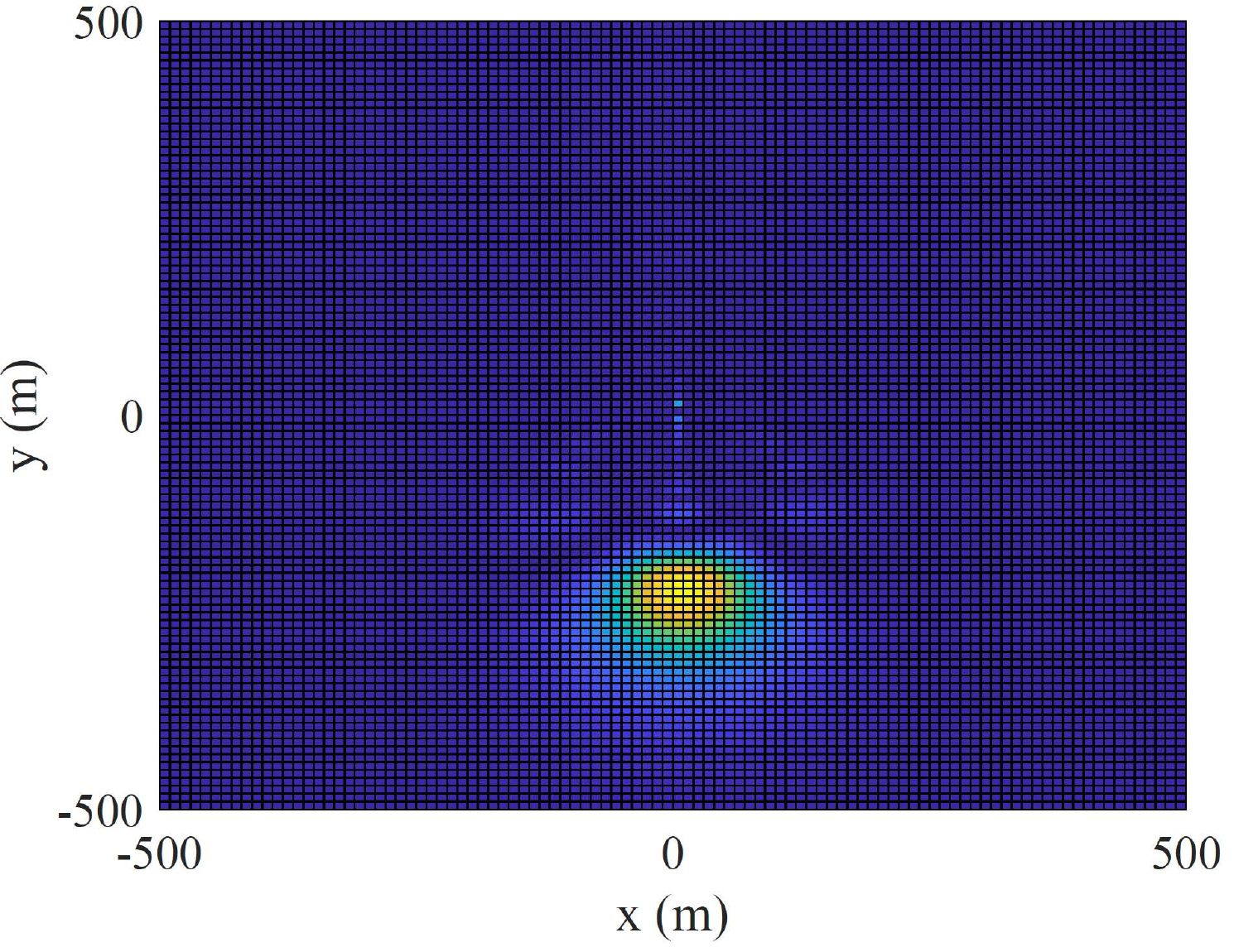}
	}		
	\caption{Echo signals power at different locations.}
	\label{figure7}
\end{figure*}

\subsection{Single-Vehicle Localization}

First, the number of the associated BSs for the optimal solution is statistically analyzed by setting $M = 10$, as shown in Table \ref{Table1}, where the location of BSs is randomly generated. It is worth noting that with probability 90$\%$, the IRS only associates with two BSs at the optimal solution, i.e., even if there are more candidate BSs, it tends to choose fewer BSs to associate with. Moreover, the optimal number of associated BSs is generally less than four. This is expected since associating with more BSs will disperse the echo's energy, resulting in a decrease in the average accuracy of distance measurements. In other words, the diversity gain of measurement directions cannot compensate for the loss of distance measurement accuracy. Also, this conclusion corroborates our analysis in Section \ref{FlexibleBS}, i.e., instances associated with a higher number of BSs are less likely to be optimally positioned. Moreover, when the number of associated BSs increases, the minimum time ratio of the optimal solution gradually decreases. When the optimal number of associated BSs is 4, the minimum time ratio is only 0.001, which indicates that the performance gain brought by the measurement of the corresponding BS is negligible. Therefore, in our proposed positioning scheme, it is sufficient for the IRS to actively reflect echo signals toward two or three BSs to improve sensing performance.

In Fig.~\ref{figure7}, the 2-dimensional (2D) power of echo signals is provided to illustrate the effectiveness of the control of echo signals. The location of 10 BSs and the IRS/target is shown in Fig.~\ref{figure7a}, where BSs and the IRS are represented by triangles and boxes, respectively, and the pink circle represents the uncertainty region of the vehicle's position. It is observed from Fig.~\ref{figure7} that only the associated BS 2, BS 4, and BS 8 are allocated an equal proportion of the time. As evident from the observation in Fig.~\ref{figure7}, the IRS is not obligated to be associated solely with the nearest BS. Conversely, the BS association takes into account both distance and orientation as determining factors.

\begin{figure}[t]
	\centering
	\includegraphics[width=8.3cm]{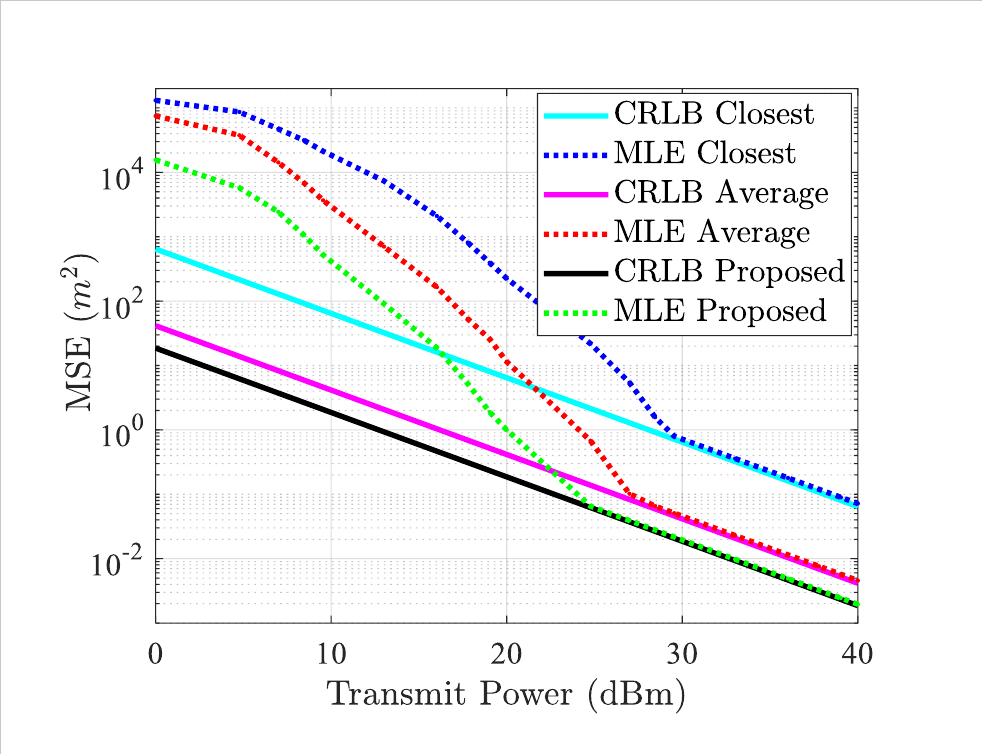}
	\caption{MSE comparison versus CRLB under different transmit power.}
	\label{figure4}
\end{figure}

Fig.~\ref{figure4} illustrates the performance for target localization in terms of the mean square error (MSE), with the increase of the transmit power $P_{\mathrm{A}}$. The MLE method is adopted for obtaining the target location via exhaustive grid search \cite{myung2003tutorial}. As expected, the location MSEs of these schemes are lower-bounded by the corresponding CRLB, and the CRLB is tight and can be achieved by the MLE in the high-transmit power regime, which is consistent with the analytical result in the literature \cite{Li2008RangeCompression, Liu2022CramerRaoBound}. It can be seen that the proposed method outperforms both benchmark scheme designs, especially when the transmit power is low. Compared to the 0.4/0.9$W$ transmit power's threshold to achieve a tight CRLB for the "average"/"closest" benchmark scheme, the proposed scheme can take the optimal time ratio to achieve that at a lower transmit power (i.e., 0.2$W$) due to a better balance between different distance measurements. This proves that CRLB minimization for the proposed localization scheme is effective to improve the target estimation performance.

\begin{figure}[t]
	\centering
	\includegraphics[width=8.3cm]{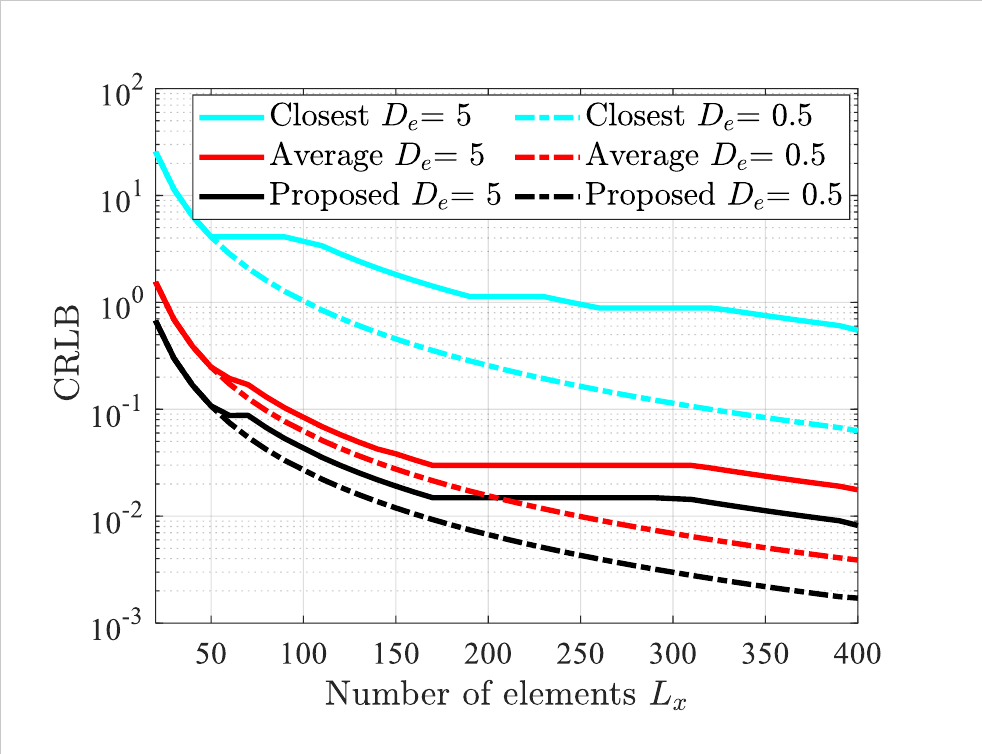}
	\caption{Localization performance comparison versus different numbers of IRS elements.}
	\label{figure5}
\end{figure}

In Fig.~\ref{figure5}, the localization performance is compared under different numbers of IRS elements with $P_{\mathrm{A}} = 0.1$W. As shown in Fig.~\ref{figure5}, under different IRS elements, the "average" scheme and the "closest" scheme lead to 1.3 and 37.3 higher CRLB as compared to the proposed scheme, respectively. Also, the CRLB gap between the case with $D_e = 0.5$m and the case with $D_e = 5$m becomes more pronounced with increasing the number of IRS elements $L$. The main reason is that the beam width of the echo signals becomes narrower with the increasing $L$, and thus the IRS needs to be divided into more sub-groups to ensure that the BS can receive echo signals in the uncertain prior region. The "closest" scheme with the prior location's error $D_e = 5$m achieves much worse positioning performance as compared to that of $D_e=0.5$m, which is mainly because the closer the potential distance between BS and IRS, the wider the beam should be designed to cover the angular range. For $D_e = 5$m, the positioning performance exhibits incremental improvements in a stepwise manner due to the IRS only being divided into integer parts.

\begin{figure}[t]
	\centering
	\includegraphics[width=8.3cm]{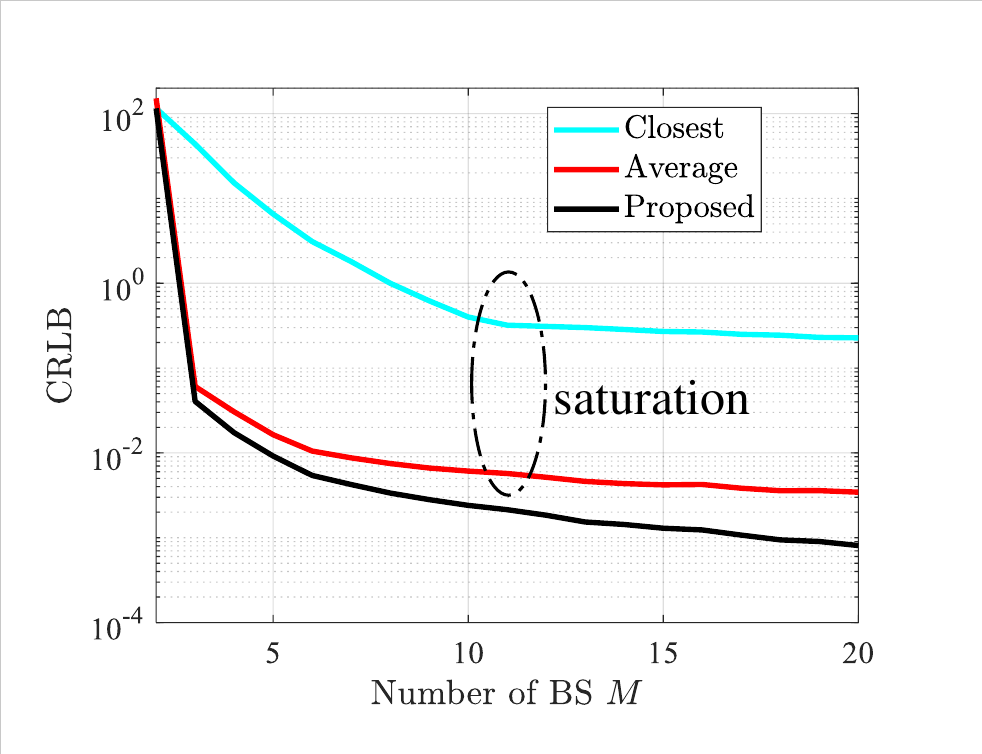}
	\caption{Statistical CRLB comparison versus different numbers of BSs.}
	\label{figure6}
\end{figure}

The statistical localization performance is compared in Fig.~\ref{figure6} for different numbers of BSs $M$, with $P_{\mathrm{A}} = 0.1$W. When the number of BSs increases from 2 to 5, the CRLB of these considered schemes is significantly decreased since there are more BSs in different azimuths to achieve a better geometry gain or an increasing SNR of echo signals. Also, as the number of candidate BSs increases, the performance improvement of our proposed algorithm over the "average" scheme gradually increases. This suggests that the benefit of associating to more BSs for higher diversity does not outweigh the SNR loss of each measurement, thereby reducing the sensing accuracy, especially for IRSs with more elements. It can be observed that the positioning performance reaches saturation when the number of BSs is larger than 11, due to the non-optimal time resource allocation for the "average" scheme or non-optimal association for the "closest" scheme, respectively.

\subsection{Multiple-Vehicle Localization}

In this subsection, the simulation results under the multi-vehicle setup are provided to analyze the relationship between the numbers of time slots, BSs, and targets. In Fig. \ref{figure8a}, we show the statistical average number of time slots $N$ comparisons versus the number of BSs by setting $K = 10$. It can be seen that the numbers of time slots for both the proposed scheme and the "closest" scheme decrease as the number of BSs increases, which is consistent with the analysis in Proposition \ref{SensingAccuracy}. The main reason is that there are more potential associations in BS-IRS association candidates with no interference, thus reducing the required number of time slots and improving the utilization effectiveness of time resources. Moreover, the number of time slots of our proposed scheme is significantly reduced as compared to the "time division" scheme under higher BS density, since more BS-IRS associations can be established at the same time to perform distance measurements simultaneously. Differently, the number of time slots of the "time division" scheme increases slightly as the number of BSs increases, since the optimal number of associated BSs for this scheme increases with the BS density to improve the localization performance. It can be found that under a larger error bound of the vehicle's prior location, the optimized number of time slots $N$ is reduced more significantly as compared to that under a lower error bound. 

\begin{figure}[t]
	\centering
	\includegraphics[width=8.3cm]{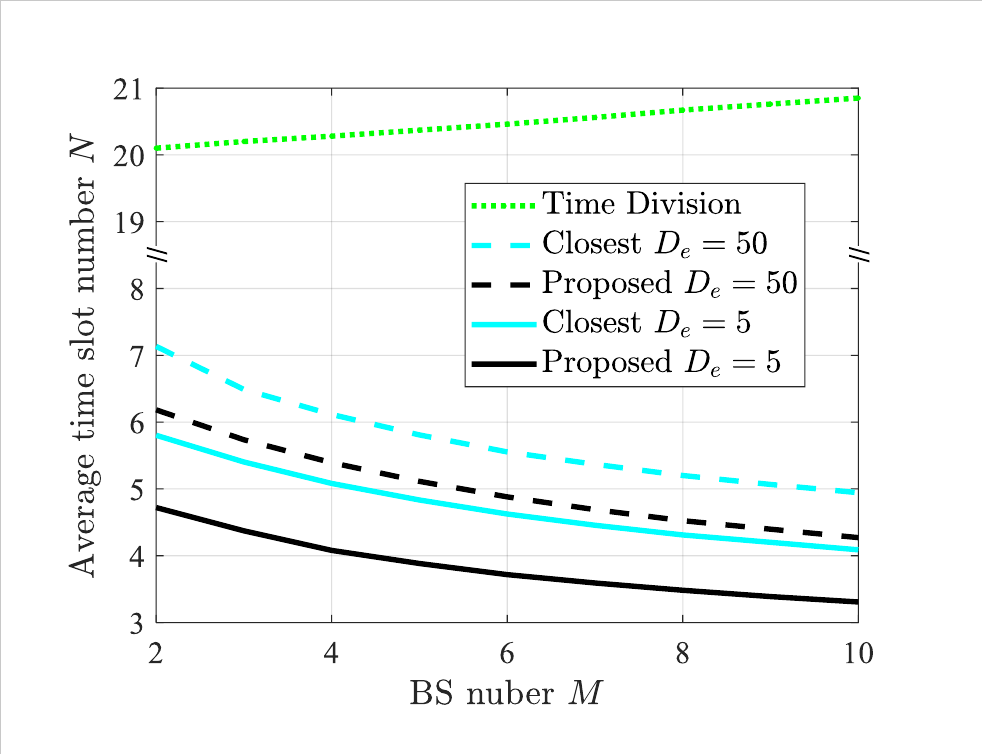}
	\caption{Statistical time slots comparison versus different numbers of BSs.}
	\label{figure8a}
\end{figure}

\begin{figure}[t]
	\centering
	\includegraphics[width=8.3cm]{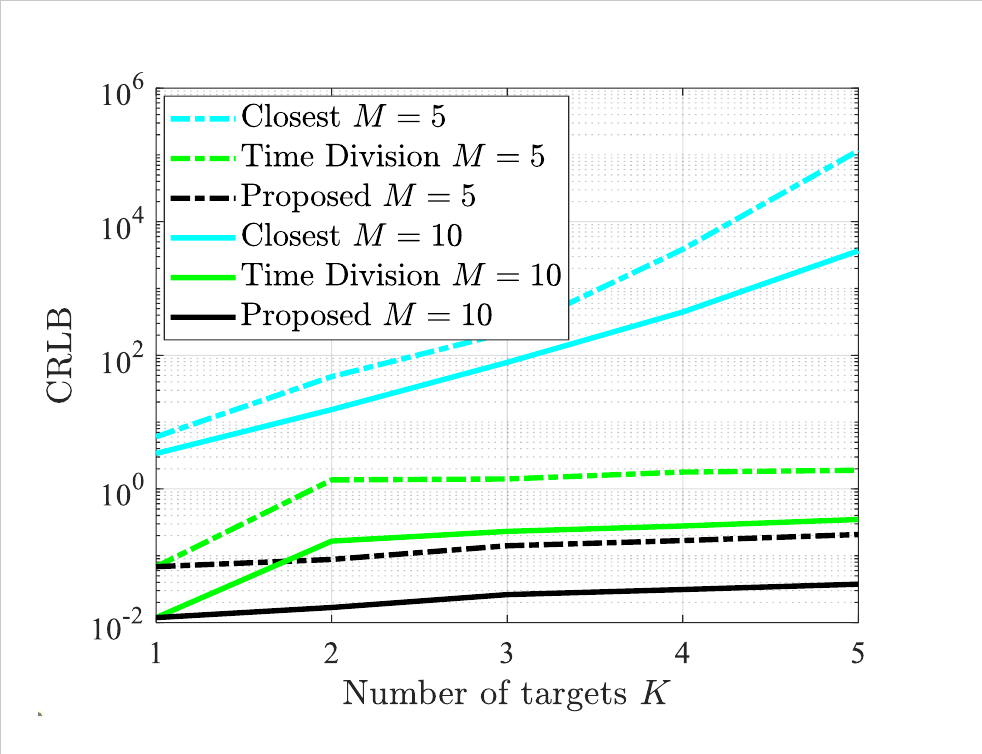}
	\caption{Statistical CRLB comparison versus different numbers of targets.}
	\label{figure8b}
\end{figure}

In Fig.~\ref{figure8b}, it can be observed that the CRLB of these considered schemes increases monotonically with the number of targets $K$, and in particular, the reduction of the CRLB achieved by our proposed scheme over two benchmark schemes increases as the number of targets increases. For a given number of targets, the CRLB with $M = 10$ can be reduced about 5 times compared to that with $M = 5$, and the main reason for the reduction in CRLB is that there are more potential BS-IRS associations with no interference, ensuring more distance measurements can be obtained simultaneously. For the "closest" scheme, when the number of BSs is less (i.e., $M = 5$), the CRLB increases faster caused by more severe interference and the reduced utilization efficiency of time resources. Notice that the proposed scheme is reduced to the "time division" scheme for single-target cases, while for two-target cases, the positioning performance of the "time division" scheme is severely degraded as compared to the proposed scheme. The main reason is that the interference probability between different BS-IRS associations is relatively low for less-target cases, and the "time division" scheme does not make full use of time and space resources. As the number of targets increases, the proposed scheme can bring greater positioning performance gain compared to the "closest" scheme. The main reason is that as the number of targets increases, the "closest" scheme only considers associating to the closest BSs, which inevitably increases the probability that different targets are associated with the same BSs, thus requiring more time slots for non-interference sensing between different targets.

\begin{figure}[t]
	\centering
	\includegraphics[width=8.3cm]{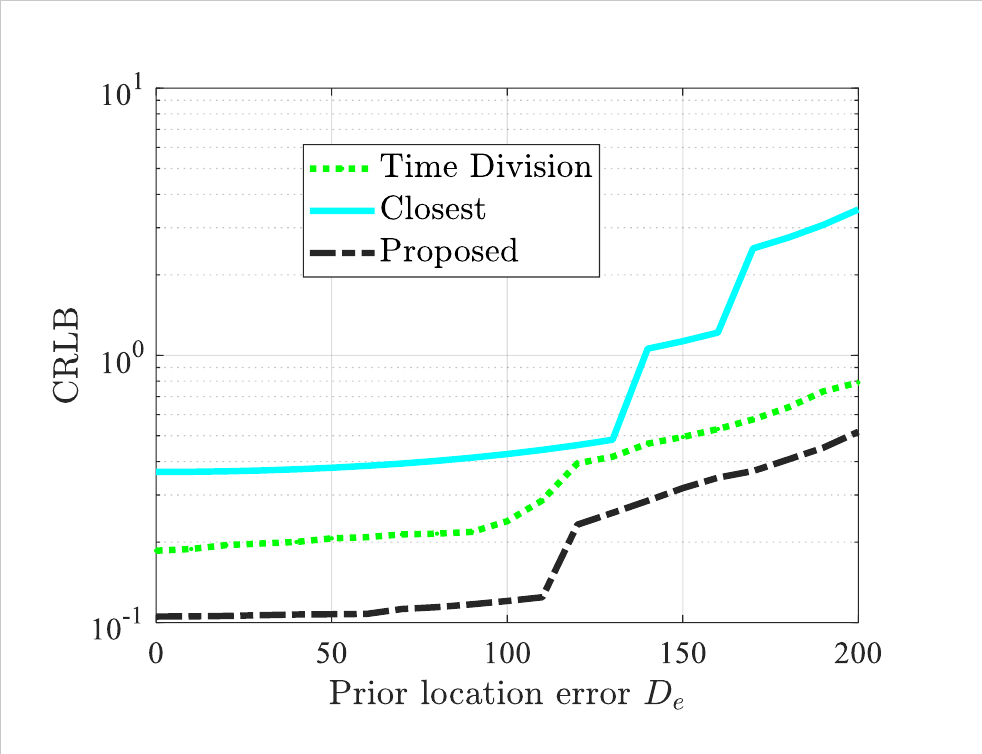}
	\caption{Statistical CRLB comparison versus different prior location error $D_e$.}
	\label{figure11}
\end{figure}

Fig.~\ref{figure11} shows the statistical CRLB comparison versus different prior location error $D_e$. It can be observed that the CRLB increases monotonically with the increase of the uncertainty range $D_e$, i.e., the positioning performance of the scheme decreases. We note that with $D_e = 0$m, the minimum CRLB is higher than 0 due to the antenna noise and the resulting limited SNR. When the error range increases to 120$m$, the performance difference between the proposed scheme and the “time division” scheme decreases. The main reason is that as the error range increases, the likelihood of interference between multiple associations also increases. Consequently, fewer BS-IRS associations can be executed within the same time slot. In this case, the proposed scheme tends to adopt a time division strategy to mitigate interference and ensure optimal performance. Moreover, to achieve the required sensing performance CRLB, the proposed scheme can greatly improve the ability to tolerate prior position errors, e.g., when ${\rm{CRLB}} = 0.3$, the largest available range of $D_e$ of our proposed scheme is respectively 0.7 and 1.5 times higher as compared to that of the "time division" scheme and the "closest" scheme. 

\section{Conclusions}
In this paper, the CRLB-minimization problem for IRS-mounted target localization systems is studied and a novel cooperative localization scheme with the active control of echo signals is proposed. In our considered system, the target association, the IRS phase shifts, and the dwell time are jointly optimized to improve sensing performance. For a single-target setup, we derive the optimal BS location to provide a lower performance bound of the original problem. Then, we prove that the transformed problem is an MO, which can be optimally solved by the Polyblock-based algorithm. Furthermore, we theoretically analyzed the relationship between the number of time slots, BSs, and targets, providing useful guidance for the practical implementation of the proposed localization scheme. Finally, simulation results demonstrate that our proposed scheme achieves a significantly lower localization error over the benchmark schemes and that the effective association between IRSs and BSs can greatly reduce the BSs'/time slots' number requirements. 

\normalsize 
\section*{Appendix A: \textsc{Proof of Lemma \ref{LowerBoundCRLB}}}

First, it is not difficult to verify that CRLB is a monotonically decreasing function of $\phi_m$. Thus, we have $\phi_m^* = \arcsin{\frac{|H_{BS} - H_{IRS}|}{\underline d}}$ at the optimal BS location.
According to the definition in (\ref{DefinitionCRLB}), we have inequality
\begin{equation}\label{EquationProof}
	\begin{aligned}
		{\rm{CRLB}} \ge \frac{{\sum\nolimits_{m = 1}^M {\frac{{{{ {\cos^2 {\phi _m}} }}}}{{\sigma _m^2}}} }}{{\sum\nolimits_{m = 1}^M {\frac{{{{ {\cos^2 {\varphi _m}\cos^2 {\phi _m}} }}}}{{\sigma _m^2}}} \sum\nolimits_{m = 1}^M {\frac{{{{ {\sin^2 {\varphi _m}\cos^2 {\phi _m}} }}}}{{\sigma _m^2}}} }}.
	\end{aligned}
\end{equation}
The equality in (\ref{EquationProof}) holds if and only if  $\sum\nolimits_{m = 1}^M {\frac{{\cos {\varphi _m}\cos {\phi _m}\sin {\varphi _m}\cos {\phi _m}}}{{\sigma _m^2}}} = 0$. Let
\begin{equation}
	{\sum\nolimits_{m = 1}^M {\frac{{{{\left( {\sin {\varphi _m}} \cos \phi_m \right)}^2}}}{{\sigma _m^2}} = G} }.
\end{equation}
Then, the denominator of (\ref{EquationProof}) can be rewritten as follows:
\begin{align}\label{MaximumDenominator}
	&\sum\nolimits_{m = 1}^M {\frac{{{{\left( {\sin {\varphi _m}\cos {\phi _m}} \right)}^2}}}{{\sigma _m^2}}} \sum\nolimits_{m = 1}^M {\frac{{{{\left( {\cos {\varphi _m}\cos {\phi _m}} \right)}^2}}}{{\sigma _m^2}}} \nonumber  \\
	=& \left( {\sum\nolimits_{m = 1}^M {\frac{{{{\left( {\cos {\phi _m}} \right)}^2}}}{{\sigma _m^2}} - G} } \right)G.
\end{align}
The value in (\ref{MaximumDenominator}) is maximized when ${G^*} = \frac{1}{2}\sum\nolimits_{m = 1}^M {\frac{{{{ {\cos^2 {\phi _m}} }}}}{{\sigma _m^2}}}$. Hence, we have
	\begin{align}
			&{\rm{CRLB}} = \frac{{\sum\nolimits_{m = 1}^M {\frac{{{{ {\cos^2 {\phi _m}} }}}}{{\sigma _m^2}}} }}{{\left( {\sum\nolimits_{m = 1}^M {\frac{{{{ {\cos^2 {\phi _m}} }}}}{{\sigma _m^2}} - G} } \right)G}} \nonumber \\
			 \overset{(b)}{\ge} &  \frac{{\sum\nolimits_{m = 1}^M {\frac{{{{ {\cos^2 {\phi _m}} }}}}{{\sigma _m^2}}} }}{{\left( {\sum\nolimits_{m = 1}^M {\frac{{{{ {\cos ^2 {\phi _m}} }}}}{{\sigma _m^2}} - \frac{1}{2}\sum\nolimits_{m = 1}^M {\frac{{{{ {\cos^2 {\phi _m}} }}}}{{\sigma _m^2}}} } } \right)\frac{1}{2}\sum\nolimits_{m = 1}^M {\frac{{{{ {\cos^2 {\phi _m}} }}}}{{\sigma _m^2}}} }} \nonumber \\
		=  & \frac{4}{{\sum\nolimits_{m = 1}^M {\frac{{{{ {\cos^2 {\phi _m}} }}}}{{\sigma _m^2}}} }}  =  \frac{4 C_0}{{\sum\nolimits_{m = 1}^M {{{{{\eta_m \bar \gamma_m^S  }}}}} }} \overset{(c)}{\ge}  \frac{4 C_0 \Delta t  \underline d^4 \sigma_s^2}{ \Delta T \beta^2_0 L^2 }.
	\end{align}
The equality in ($b$) holds when $\sum\nolimits_{m = 1}^M {\frac{{{{ {\sin^2 {\varphi _m}} \cos^2 \phi_m }}}}{{\sigma _m^2}}} = \frac{1}{2}\sum\nolimits_{m = 1}^M {\frac{{{{ {\cos^2 {\phi _m}} }}}}{{\sigma _m^2}}} $. Therefore, an optimal solution that can achieve the lower bound in Lemma \ref{LowerBoundCRLB} is $\varphi_m^* = \frac{2(m-1)\pi}{M}$ and $\phi_m^* = \arcsin{\frac{|H_{BS} - H_{IRS}|}{\underline d}}, m = 1, \cdots, M$ under the constraints $\sum\nolimits_{m = 1}^M {\frac{{\cos {\varphi _m}\cos {\phi _m}\sin {\varphi _m}\cos {\phi _m}}}{{\sigma _m^2}}} = 0$ and $\sum\nolimits_{m = 1}^M {\frac{{{{ {\sin^2 {\varphi _m}} \cos^2 \phi_m }}}}{{\sigma _m^2}}} = \frac{1}{2}\sum\nolimits_{m = 1}^M {\frac{{{{ {\cos^2 {\phi _m}} }}}}{{\sigma _m^2}}} $.
By combining the above analysis, the minimum ${\rm{CRLB}}  = \frac{4 C_0 \Delta t  \underline d^6 \sigma_s^2}{ \Delta T \beta^2_0 L^2  \left(\underline d^2 - (H_{BS} - H_{IRS})^2\right)}$.

\section*{Appendix B: \textsc{Proof of Proposition \ref{MonoProof}}}
First, we let $x_m = {\eta_m} \bar \gamma^S_m \cos^2 \phi_m$. Proposition \ref{MonoProof} holds if CRLB is a monotonically decreasing function of $x_m$ since $x_m$ is an affine transformation of $\eta_m$. Let $${\rm{CRLB}} = \frac{{\sum\nolimits_{m = 1}^M {{x_m}} }}{{\sum\nolimits_{m = 1}^{M - 1} {{x_m}\sum\nolimits_{i = m}^M {{x_i}{{ {\sin^2 \left( {{\phi _i} - \sin {\phi _m}} \right)} }}} } }}  \buildrel \Delta \over = f({\bm{x}}),$$ where ${\bm{x}} = [x_1, \cdots, x_M]$.
Then, the partial derivative of {{CRLB}} with respect to $x_{m'}$ is given by $\frac{{\partial f({\bm{x}})}}{{\partial x_{m'}}} = \frac{g(\bm{x})}{{{{\left( {\sum\nolimits_{m = 1}^{M \!-\! 1} {{x_m}\sum\nolimits_{i = m}^M \! {{x_i}{{ {\sin^2 \left( {{\phi _i} -  {\phi _m}} \right)} }}} } } \right)}^2}}}$, where $g(\bm{x}) = \sum\nolimits_{m \ne m'}^{M - 1} {{x_{m}}\sum\nolimits_{i \ne m,m'}^M {{x_i}{{ {\sin^2 \left( {{\phi _i} -  {\phi _{m}}} \right)} }}} }  - \left( {\sum\nolimits_{m \ne m'}^M {{x_{m}}} } \right)\sum\nolimits_{i \ne {m'}}^M {{x_i}{{ {\sin ^2 \left( {{\phi _i} -  {\phi _{m}}} \right)} }}}$. In the following, it will be proved that for any given $\{\phi_m\}$, we have $\frac{{\partial^2 f({\bm{x}})}}{{\partial^2 x_{m'}}} \le 0$. Since $\frac{\partial^2 g(\bm{x})}{\partial^2 x_m} \le 0$, $x^*_m =  \arg \max \limits_{x_m} g(\bm{x}) = 0$ if $\frac{\partial g(\bm{x})}{\partial x_m} \le 0$. In this case, $x_m$ can be removed. Otherwise, $\frac{\partial g(\bm{x})}{\partial x_m} > 0$, the optimal solution to the maximum value of $g(\bm{x})$ is obtained if and only if $\frac{\partial g(\bm{x})}{\partial x_m} = 0$. In this case, we have 
	\begin{align}
		&g({\bm{x}}) \nonumber \\
		= \! & \sum\nolimits_{m \ne m'}^{M} \!  x_m \!  \sum\nolimits_{i \! \ne \!  m,m'}^{M} ( \sin^2(\phi_m \! - \!  \phi_{m'}) \!  - \!  \sin^2\phi_m \!  - \!  \sin^2\phi_{m'} )  \nonumber \\
		&- ( \sum\nolimits_{m \ne m'}^{M} x_m^2 \sin^2(\phi_m) ) \le 0.
	\end{align}
Thus, $\frac{{\partial f({\bm{x}})}}{{\partial x_{m}}} \le 0$, $\forall m \in {\cal{M}}$. By combining the above results, the proof is completed.

\footnotesize  	
\bibliography{mybibfile}
\bibliographystyle{IEEEtran}

\end{document}